\documentclass[11pt]{article}
\usepackage{fullpage}
\usepackage{times}
\usepackage[numbers]{natbib}
\usepackage{subfigure}
\usepackage{graphicx,amssymb,amsmath}
\usepackage{epsfig}
\usepackage{algorithm}
\usepackage{latexsym}
\usepackage{url}
\usepackage{color}
\usepackage{graphicx,amssymb,amsmath}
\usepackage{url}
\newtheorem{theorem}{Theorem}[section]
\newtheorem{lemma}[theorem]{Lemma}

\newtheorem{defi}[theorem]{Definition}
\newtheorem{definition}[theorem]{Definition}

\newcommand{\sq}{\hbox{\rlap{$\sqcap$}$\sqcup$}}
\newcommand{\qed}{\hspace*{\fill}\sq}
\newenvironment{proof}{\noindent {\bf Proof.}\ }{\qed\par\vskip 4mm\par}
\author{
   Pingzhong Tang \\
   IIIS, Tsinghua University \\
   kenshinping@gmail.com
   \and
   Zihe Wang \\
   IIIS, Tsinghua University \\
   wzh5858588@163.com\\
   }\allowdisplaybreaks

\allowdisplaybreaks

\begin{document}


\title{Optimal Auctions for Negatively Correlated Items}
\author{
   Pingzhong Tang \\
   IIIS, Tsinghua University \\
   kenshinping@gmail.com
   \and
   Zihe Wang \\
   IIIS, Tsinghua University \\
   wzh5858588@163.com\\
   }

\maketitle
\begin{abstract}
We consider the problem of designing revenue-optimal auctions for selling two items and bidders' valuations are independent among bidders but negatively correlated among items. Abstractly, this setting can be thought of as an instance with single-dimensional type space but multi-dimensional allocation space. Such setting has been extensively studied in the literature, but all under the assumption that the items are positively correlated. Under the positive correlation assumption, the optimal allocation rules are {\em simple}, avoiding difficulties brought by ensuring Bayesian incentive compatibility (BIC) under multi-dimensional feasibility constraints and by ensuring interim individual rationality (IIR) given the possibility that the lowest utility point may no longer be at the boundary of the type domain. However, the nice properties no longer hold when there is negative correlation among items.

In this paper, we obtain the closed-form optimal auction for this setting, by directly addressing the two difficulties above. In particular, the first difficulty is that when pointwise maximizing virtual surplus under multi-dimensional feasibility (i.e., the Border feasibility), (1) neither the optimal interim allocation is trivially monotone in the virtual value, (2) nor the virtual value is monotone in the bidder's type. As a result, the optimal interim allocations resulting from virtual surplus maximization no longer guarantee BIC. To address (1), we prove a generalization of Border's theorem and show that optimal interim allocation is indeed monotone in the virtual value. To address (2), we adapt Myerson's ironing procedure to this setting by redefining the (ironed) virtual value as a function of the lowest utility point. The second difficulty, perhaps a more challenging one, is that the lowest utility type in general is no longer at the endpoints of the type interval. To address this difficulty, we show by construction that there exist an allocation rule and an induced lowest utility type such that they form a solution of the virtual surplus maximization and in the meanwhile guarantees IIR.

In the single bidder case, the optimal auction consists of a randomized bundle menu and a deterministic bundle menu; while in the multiple bidder case, the optimal auction is a randomization between two extreme mechanisms. The optimal solutions of our setting can be implemented by a Bayesian IC and IR auction, however, perhaps surprisingly, the revenue of this auction cannot be achieved by any (dominant-strategy) DIC and IR auction. In other words, {\em we have witnessed the first instance where a Bayesian IC auction yields strictly more revenue than any IC auction}, resolving an important conjecture concerning whether there exists a quasi-linear setting where the optimal revenues differ under the two different solution concepts.

\end{abstract}


\section{Introduction}


We consider a setting where a revenue-maximizing monopolist has two items for sale. The bidders' valuations are bidder-wise independent but negatively correlated item-wise, concretely in the form of $v_i^1+av_i^2=b$, where $v_i^1$ and $v_i^2$ are the private valuations of bidder $i$ for the two items respectively and $a,b$ are positive constant.

We obtain in closed-form revenue-optimal auctions subject to Bayesian incentive compatibility (BIC) and {\em interim} individual rationality (IIR) in this setting. For the case of a single bidder, the optimal auction consists of two options: a randomized bundle ($1/a$ fraction of the first item together with the second item) as well as a deterministic grand bundle. For the case of multiple i.i.d bidders, the optimal auction is a randomization between two extreme allocations resulted from solving a relaxed seller's problem. We show that such allocation can be implemented by a BIC and ({\em ex post}) IR auction.

Perhaps surprisingly, the optimal revenue achieved by the auction above cannot be achieved by any DIC and ({\em ex post}) IR auction. In other words, we have actually witnessed the first setting, to our best knowledge, where optimal revenues differ with respect to the two notions of IC, resolving a conjecture raised in a series of important papers concerning revenue equivalence between Bayes Nash implementation and dominant strategy implementation under quasi-linear utility \cite{manelli2010bayesian,gershkov2013equivalence,yao2015n}. Given the importance of this conjecture, we provide an independent proof for the case where two bidders have i.i.d types both drawn from uniform distribution on $[0,1]$.

To understand non-triviality of the problem and our technical contributions, let us first recall Myerson's approach for the single item case [Myerson 1981]. In Myerson's analysis, an important intermediate step is to show that the expected revenue equals the expected {\em virtual surplus}, minus the sum of bidders' utilities in their lowest types. By pointwise maximizing virtual surplus (i.e., give the item to the bidder with the highest non-negative virtual value), one can obtain an allocation rule whose interim expectation ({\em aka. allocation in the reduced-form})
\cite{cai2012optimal,cai2012algorithmic} is monotone in one's virtual value. Together with the regularity assumption (or the so-called ironing technique) that ensures the virtual value is further monotone in one's type, one can guarantee that the virtual surplus maximization allocation is monotone in bidder's interim type. In other words, the optimal allocation of the relaxed maximization problem (without considering IC and IR) satisfies IC.  Noticing that a bidder must attain his lowest utility at his lowest type, one can find, for any allocation rule, a payment rule (aka. payment identity) that guarantees zero utility for the lowest type agents. In other words, IR is ensured for any allocation rule with this payment identity.

In our setting, there are a number of essential differences.
The first observation is that the lowest utility point is not necessarily at the endpoints of the valuation interval.
It is also not easy to claim  there is a fixed zero point no matter what other bidders report.
As a result, a clean payment rule cannot be derived from an {\em ex post} type representation.
This difficulty motivates us to adopt an interim representation of the revenue formula, that is, to represent the revenue as a function of interim utility and interim allocation.

However, to accommodate interim allocation feasibility contraints in the problem formulation requires the so-called Border's theorem [Border 1992], a neat sufficient and necessary condition that characterizes exactly what set of interim allocation rules can actually be implemented by an auction ({\em ex post} allocation). Introducing Border's feasibility condition complicates the relaxed seller problem, i.e., the problem of choosing a virtual surplus maximization interim allocation without considering BIC and IIR. It is no longer trivial that the resulting allocation is monotone in virtual value, a key step in Myerson's analysis that guarantees BIC. Our first technical contribution is to ensure this is indeed the case. In particular, we put forth a surplus-maximization version of Border's theorem, that (1) remains to be a sufficient and necessary condition of the implementability of an interim allocation rule; (2) ensures the optimal allocation rule resulted from virtual surplus maximization (for any definition of virtual value function) subject to Border feasibility being monotone in the virtual value. With this generalization, together with a properly generalized ironing procedure (by defining virtual value as a function of both the actual type as well as the lowest utility type), we ensure that the allocation resulted from the relaxed seller's problem is indeed monotone in bidder's type.

A more challenging step is to guarantee IIR. Our strategy is to search through the set of allocations from the previous step (the solutions that solve the relaxed seller's problem and satisfies the conditions sufficient for the ironing procedure to guarantee optimal revenue) to find an allocation rule that ensures IIR and that the lowest utility is indeed zero. That is, this allocation rule achieves an allocation-independent revenue upper bound set by the relaxed seller's problem, while guarantees both BIC and IIR. In other words, this is the optimal allocation we look for. This step is established by a characterization of the set of all optimal allocations from the relaxed seller's problem, a careful analysis of possible allocations, and extreme allocations in particular, for the set of types that share the same virtual value and by establishing certain continuity of the optimal allocation function.

\subsection{Related work}

Negatively correlated valuations are not uncommon in the literature. For example, consider an instance of the well-known {facility location game} with two facilities \cite{lu2010asymptotically}: agents are interested in services from the two facilities, located at the endpoints of a street. Each agent's type is his/her location on the street, between the two facilities. The cost (negative utility) to either facility, the same as in \cite{procaccia2009approximate}, is simply one's distance to that facility, so his overall costs (as well as valuations) sum up to a constant --- the length of the street.

As mentioned, there have been extensive studies on extending Myerson's technique to settings with multi-dimensional flavor. \cite{haghpanah2015reverse,armstrong1996multiproduct} consider the aforementioned positive correlation setting, but restricted to the single bidder case and unit-demand valuations. In their setting, a bidder's type is represented by his valuation towards the more favored item and his valuation towards the less favored item is simply his type multiplied a discount factor, private in \cite{armstrong1996multiproduct} and public in \cite{haghpanah2015reverse}. To visualize, one can think of this type domain as a {\em ray} from the origin on the two-dimensional plane. In \cite{armstrong1996multiproduct}, it is shown that the optimal mechanism is simply to post a price for the higher valued item. Moreover, he shows that such price is independent from the discount factor, allowing rays with different slopes to have a uniform optimal mechanism. Since one can fully cover certain two dimensional distributions with such rays, the optimal posted price mechanism is actually optimal for these two-dimensional distributions as well. In \citep{haghpanah2015reverse}, the above result on rays is extended to curves satisfying certain monotone conditions. These conditions are derived by first assuming the optimality of the posted price mechanism and then inferring reversely under what type of distributions is the posted price mechanism virtual-surplus-maximizing. Comparing to these works, negative correlation adds additional difficulties in that the lowest utility point is not necessary at the boundary type. In addition, comparing to \cite{haghpanah2015reverse} that focuses on a well-behaved subset of distributions, we deal with the multi-dimensional feasibility directly without any assumption on the type distribution.

\cite{Levin97} considers a setting where the bidders valuations are positively correlated between the items and are both weakly increasing with respect to a one-dimensional type. Levin shows that, under regularity condition, this case can be solved using Myerson's very method. Our work can be seen as a complement of Levin's setting to the negative correlation setting.

Extending the ironing technique to multidimensional domain dates back to the seminal work by  \cite{rochet1998ironing}, where they study the single bidder case.
They invent a {\em sweeping procedure} that properly generalizes ironing to multidimensions. Their method is non-constructive, thus does not yield an explicit representation of the optimal mechanism.

At a higher level, our work is within the agenda of multidimensional revenue maximization.
In particular, our work is in the direction of exactly optimal mechanism design \cite{Daskalakis13,daskalakis2015strong,cai2011optimal,haghpanah2015reverse},
orthogonal to those aiming for approximate optimality  \cite{Hartline09,tang2011approximating,Hart2012a,tang2012mixed,Yao2013,yao2015n}.

There are a number of papers that consider the equivalence between BIC and DIC with respect to the objective of social welfare\cite{d1979incentives,cremer1988full}.
For the objective of revenue maximization,~\cite{manelli2010bayesian} proves that in the independent private values model with linear utility,
the outcome in terms of interim allocation rule and interim payment rule of any BIC mechanism, can also be obtained with a DIC mechanism (without any restriction on IR). Since the interim allocation rule and payment rule are same under the two IC notions, their work actually implies that any interim IR and BIC auction can be implemented in an interim IR and DIC auction. However, do notice that their result does not imply the equivalence between ex post IR and BIC mechanisms and ex post IR and DIC mechanisms, because equivalence in interim utilities does not imply equivalence in ex post utilities under the two notions of IC.

\cite{gershkov2013equivalence} extend the above work to general multi-dimensional, possibly non-linear settings. They provide a setting where, under a restricted definition of mechanism,
 the optimal DIC ``mechanism'' produces strictly less revenues than the optimal BIC ``mechanism''.

However, in their setting that sells two items to two bidders, any mechanism under their definition must be restricted to only three deterministic allocations: sell the bundle to the first bidder, sell the bundle to the second bidder, and sell one to each. So the revenue optimal ``mechanism'' under their definition is not optimal under the standard definition, because they preclude the possibility of reserving one or both of the items.

\section{Setting}

A seller has two items for sale. There are $n$ bidders interested in the items. For each bidder $i$, his valuation $(v_i^1,v_i^2)$ towards the two items are negatively correlated, satisfying $v_i^1+av_i^2=b$, where $a\geq 1$ and $b>0$. Now that $v_i^2=(b-v_i^1)/a$, we can use a one-dimensional variable $t_i=v_i^1$ to denote bidder $i$'s type. We assume each $t_i$ is independently, identically drawn from a differentiable density function $f$ on $[0,b]$. In other words, the probability density at valuation $(t_i, (b-t_i)/a)$ is $f(t_i)$.

By the revelation principle \cite{Myerson81}, it is without loss for the seller to focus on direct mechanisms. Such mechanism first solicits a type profile $\textbf{t}=(t_1,\ldots,t_n)$ from the bidders and then specifies an allocation rule and a payment rule. We use $q^j_i(\textbf{t})\geq 0, ~(j=1,2)$ to denote the probability that bidder $i$ is allocated the $j$-th item, and $pay_i(\textbf{t})$ to denote his payment. For each item $j$, the allocation rule must satisfy the feasibility constraint, i.e., $\sum_i  q^j_i(\textbf{t})\leq 1, \forall \textbf{t}$. Bidder $i$'s utility in this case is $u_i(\textbf{t})=q_i^1(\textbf{t})t_i+q_i^2(\textbf{t})(b-t_i)/a -pay_i(\textbf{t})$.

The seller's goal is to design a direct mechanism (i.e., auction) that maximizes the sum of payments (i.e., revenue) subject to {\em Bayesian incentive compatibility (BIC)} and {\em Interim individual rationality (IIR)} constraints~\cite{Myerson81}.

Bayesian incentive compatibility states that
no bidder has any incentive to lie about his valuation without knowing other bidders' types. Formally,
$$
E_{\textbf{t}_{-i}}[q^1_i(t'_i,\textbf{t}_{-i})t_i+q^2_i(t'_i,\textbf{t}_{-i})\frac{b-t_i}{a}-pay(t'_i,\textbf{t}_{-i})]
\leq
E_{\textbf{t}_{-i}}[u_i(t_i,\textbf{t}_{-i})]
$$

Interim individual rationality states that
a bidder never gets negative expected utility by participating in the auction. Formally, $E_{\textbf{t}_{-i}}[u_i(t_i,\textbf{t}_{-i})]\geq 0$.
Interim IR is a weaker concept than the standard {\em ex post} IR, which requires that a bidder never gets negative utility for any realization of type profile. Even though we have interim IR in the problem formulation, the optimal auctions we obtain are indeed {\em ex post} IR.

It is convenient to have the following {\em interim} notations (aka. reduced forms).

\begin{definition}
We use $q^j_i(t_i), ~i=1,2$ to denote the expected probability that bidder $i$ is allocated the $j$-th item at type $t_i$. Formally $q^j_i(t_i)=E_{\textbf{t}_{-i}}[q^j_i(t_i,\textbf{t}_{-i})]$.
Similarly, we can define interim payment and utility $pay_i(t_i)$ and $u_i(t_i)$.
\end{definition}

Furthermore, for bidders with i.i.d type distributions, by \citep{Maskin1984}, it is without loss of generality to only consider mechanisms that are symmetric in bidders.
In this case, we simplify notation by dropping subscript $i$, i.e. $q^j(t_i)=q^j_i(t_i)$, $pay(t_i)=pay_i(t_i)$ and $u(t_i)=u_i(t_i)$.
Furthermore we use $t$ to denote one bidder's type directly when there is no ambiguity.

It is well known \cite{rochet1985taxation, Hart2012b, wang2014optimal} that the BIC constraint is equivalent to that each bidder has a convex utility function, i.e., $u''(\cdot)\geq 0$.
On the other hand, given such a utility function, the allocation rule is given by the gradient of the utility function,
\begin{eqnarray}
u'(t)=\frac{\partial u}{\partial v^1}\frac{\partial v^1}{\partial t}(t)+\frac{\partial u}{\partial v^2}\frac{\partial v^2}{\partial t}(t)=q^1(t)-\frac{q^2(t)}{a}\nonumber
\end{eqnarray}

The seller's problem is formally defined as follows.
\begin{eqnarray}
Maximize && REV = n\int^b_0[t\cdot q^1(t)+\frac{b-t}{a}\cdot q^2(t)-u(t)]\cdot f(t) dt\nonumber\\
s.t. && \sum_i q^j_i(\textbf{t})\leq 1~ \forall \textbf{t}\in [0,b]^n,j=1,2\nonumber\\
&& u(t)\geq 0, u'(t)=q^1(t)-\frac{q^2(t)}{a}, u''(t)\geq 0, \forall t\in [0,b]\nonumber
\end{eqnarray}

\subsection{Border's theorem}

Note that in the seller's problem, everything is defined in terms of interim notations except for the feasibility constraints. Fortunately, the following theorem, well known as Border's theorem, allows us to equivalently convert the feasibility constraints into interim notations.
Theorem~\ref{border} and Lemma \ref{generalborder} are applied for the asymmetric case. We use $f_i$ to denote the density function of bidder $i$.

\begin{theorem}~\cite{border1991implementation,che2011generalized, daskalakis2015strong, gopalan2015public}
There exists a feasible mechanism that implements the interim allocation rule $q^j_i$ if and only if
$$\sum_i \int_{t\in S_i} q_i(t)f_i(t)dt\leq 1-\prod_i(1-\int_{t\in S_i}f_i(t)dt)~\forall S_i, i=1,...,n$$
$S_i$ is any subset of type of player $i$. Here, $S_i\subseteq[0,b], i=1...n$.
\label{border}
\end{theorem}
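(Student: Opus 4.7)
The plan is to prove the two directions separately; since the ex-post feasibility constraint $\sum_i q^j_i(\textbf{t})\leq 1$ decouples across items $j$, it suffices to prove the claim for a single item with interim allocations $q_i(\cdot)$.

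For necessity ($\Rightarrow$), I would fix any feasible ex-post rule $Q_i(\textbf{t})$ that induces $q_i$, i.e.\ $q_i(t_i)=E_{\textbf{t}_{-i}}[Q_i(t_i,\textbf{t}_{-i})]$. For arbitrary measurable sets $S_1,\ldots,S_n$,
$$\sum_i \int_{S_i} q_i(t)f_i(t)\,dt \;=\; \sum_i \Pr[\,Q_i \text{ awards the item to } i \text{ and } t_i\in S_i\,].$$
Because $\sum_i Q_i(\textbf{t})\leq 1$, the item is awarded to at most one bidder, so these events are disjoint. Hence the sum equals $\Pr[\text{item awarded to some bidder whose type lies in } S_i]$, which is at most $\Pr[\exists i:\,t_i\in S_i]=1-\prod_i(1-\int_{S_i}f_i(t)\,dt)$ by independence across bidders.

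For sufficiency ($\Leftarrow$), I would take the LP-duality route. First discretize the type space so that the set of feasible ex-post allocation rules is a polytope in finite dimensions, and the marginalization map sending $Q$ to its interim allocation $q$ is linear. Given a candidate $\{q_i\}$, the feasibility question becomes an LP: find $Q_i(\textbf{t})\geq 0$ with $\sum_i Q_i(\textbf{t})\leq 1$ whose marginals equal $q_i$. If no such $Q$ exists, the dual LP yields a separating hyperplane. The structural heart of the argument is to show that an extreme separating multiplier can be taken to be $\{0,1\}$-valued in each coordinate, so that it corresponds to a choice of subsets $S_i\subseteq[0,b]$; the resulting violated inequality is then exactly $\sum_i \int_{S_i}q_i(t)f_i(t)\,dt > 1-\prod_i(1-\int_{S_i}f_i(t)\,dt)$, contradicting the hypothesis. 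Finally I would pass from the discrete to the continuous setting by taking a weak limit of implementations on progressively finer partitions.

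The main obstacle is the structural lemma underlying sufficiency, namely that extreme dual multipliers are $\{0,1\}$-valued (equivalently, thresholds/indicator subsets) rather than arbitrary nonnegative weights. This is essentially the polymatroid/submodularity phenomenon at the core of Border's theorem and is usually proven by an exchange argument on the dual LP. The continuous limit also requires care to ensure the limiting ex-post rule is measurable and satisfies $\sum_i Q_i(\textbf{t})\leq 1$ pointwise rather than only in an integral sense. Since the paper cites several self-contained proofs (Border 1991; Che--Kim--Mierendorff; Daskalakis--Weinberg; Gopalan--Nisan--Roughgarden), I would invoke one of these for the combinatorial core rather than re-derive it, using the theorem here as a black box that converts ex-post feasibility into an interim constraint amenable to the virtual-surplus analysis that follows.
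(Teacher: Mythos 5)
The paper does not prove Theorem~\ref{border}; it states Border's theorem as a known result and cites \cite{border1991implementation,che2011generalized,daskalakis2015strong,gopalan2015public}, so there is no in-paper proof to compare your proposal against. Your sketch is a faithful outline of the standard argument: the necessity direction you give is essentially complete (writing $\int_{S_i}q_i(t)f_i(t)\,dt$ as the probability that bidder $i$ wins with $t_i\in S_i$, using $\sum_i Q_i(\textbf{t})\le 1$ to make those events disjoint across $i$, and bounding by $\Pr[\exists i:\,t_i\in S_i]$ via independence), and the sufficiency direction correctly identifies the LP-duality/polymatroid reduction and the fact that extreme separating multipliers can be taken $\{0,1\}$-valued as the technical crux, deferring that core to the cited works. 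Since the paper itself invokes the theorem as a black box, your treatment and the paper's are in effect the same; the one point worth flagging is that your sufficiency sketch is an outline rather than a proof --- the exchange argument for $\{0,1\}$-valued duals and the measurability of the continuous-limit ex-post rule are nontrivial and would need to be imported verbatim from one of the references if a self-contained argument were required.
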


In a symmetric setting, feasible constraint reduces to that
the inequality always hold when $S_i, \forall i$ are the same.
By Border's Theorem, the seller's problem is equivalent to,

\begin{eqnarray}
Maximize && REV = n\int^b_0[t\cdot q^1(t)+\frac{b-t}{a}\cdot q^2(t)-u(t)]\cdot f(t) dt\nonumber\\
s.t. && \int_{t\in S} q^j(t)f(t)dt\leq \frac{1-(1-\int_{t\in S}f(t))^n}{n}~\forall S, j=1,2\nonumber\\
&& u(t)\geq 0, u'(t)=q^1(t)-\frac{q^2(t)}{a}, u''(t)\geq 0\nonumber
\end{eqnarray}

\subsection{Monotonicity of the optimal solution of the relaxed seller's problem}
Recall Myerson's techniques in the single-item case, revenue is expressed as the expected virtual surplus and the optimal allocation rule is to naturally allocate the item to the bidder with the highest virtual value. A key property, perhaps also a trivial one, is that the allocation rule is monotone with respect to the virtual value, i.e., higher virtual value implies higher probability of being allocated. This property, together with the regularity assumption (or the ironing technique in the irregular case), implies that the interim allocation is monotone in bidder's interim type. In other words, IC is guaranteed in the relaxed virtual surplus maximizing solution.

However, in our problem, such property is nontrivial because of the multi-dimensional feasibility constraints, i.e., Border feasibility. Therefore, to extend Myerson's technique to this setting, an important step is to show that, the virtual surplus maximizing allocation rule is still monotone with respect to the virtual value. Then, with a generalized ironing technique that ensures virtual value is monotone in interim type, we guarantee that the optimal allocation rule in the virtual surplus maximizing solution satisfies the IC constraints.

The following lemma and theorem, in the form of generalizations of Border's Theorem, are for this purpose. The lemma and theorem may be of independent interest.

%

\begin{lemma}
\label{generalborder}
There exists a feasible mechanism that implements the interim allocation rule $q_i$ if and only if
for any {\em virtual value function} $x^j_i(t)\in[0,+\infty)$,
$$\sum_{i}\int^{b}_0q_i(t)f_i(t)x_i(t)dt\leq \int^{+\infty}_{0}1-\prod_i(1-\int_{x_i(t)\geq v}f_i(t)dt)dv$$
\end{lemma}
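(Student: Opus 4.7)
The plan is a two-way reduction to the classical Border's theorem (Theorem~\ref{border}) via the layer-cake (horizontal) decomposition of nonnegative functions:
$$x_i(t) = \int_0^{\infty} \mathbf{1}\{x_i(t) \geq v\}\, dv.$$
For each level $v \geq 0$, write $S_i^v := \{t \in [0,b] : x_i(t) \geq v\}$, so the "virtual value" weighting is just a continuous integral of indicator-set Border conditions.

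For the ``if'' direction, specialize the hypothesized inequality to indicator virtual values $x_i(t) = \mathbf{1}_{S_i}(t)$ for arbitrary subsets $S_i$. Since $\{x_i \geq v\} = S_i$ for $v \in (0,1]$ and is empty otherwise, the right-hand side collapses to $1 - \prod_i(1 - \int_{S_i} f_i(t)\, dt)$, and the left-hand side to $\sum_i \int_{S_i} q_i(t) f_i(t)\, dt$. This is exactly the classical Border inequality, so by Theorem~\ref{border}, $q_i$ is implementable.

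For the ``only if'' direction, assume $q_i$ is implementable. First apply Theorem~\ref{border} to the family $\{S_i^v\}_i$ at each level $v$:
$$\sum_i \int_{S_i^v} q_i(t) f_i(t)\, dt \leq 1 - \prod_i\left(1 - \int_{S_i^v} f_i(t)\, dt\right).$$
Integrate both sides over $v \in [0,\infty)$ (if the right-hand side diverges the claim is vacuous, so assume it is finite). On the left, Tonelli's theorem and the layer-cake identity give
$$\int_0^{\infty} \sum_i \int_{S_i^v} q_i(t) f_i(t)\, dt\, dv = \sum_i \int_0^b q_i(t) f_i(t) \left(\int_0^{\infty} \mathbf{1}\{x_i(t) \geq v\}\, dv\right) dt = \sum_i \int_0^b q_i(t) f_i(t) x_i(t)\, dt,$$
while on the right the integrand is precisely the quantity appearing in the statement. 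Swapping sum and integral is justified because all integrands are nonnegative.

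I do not anticipate a serious technical obstacle: the whole lemma is a clean layer-cake reduction, and the only items to check are the applicability of Tonelli (immediate from nonnegativity of $q_i$, $f_i$, and $x_i$) and the degenerate case where $x_i$ is unbounded, in which case the inequality is trivial whenever the right-hand side is infinite. The value of the lemma is not in its difficulty but in its framing: it re-expresses implementability as a single scalar inequality parametrized by arbitrary weight functions $x_i(\cdot)$, which is precisely the form needed to run a virtual-surplus maximization argument in the subsequent section.
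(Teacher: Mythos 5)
Your proof is correct and takes essentially the same route as the paper: the ``only if'' direction via the layer-cake decomposition $x_i(t)=\int_0^\infty \mathbf{1}\{x_i(t)\ge v\}\,dv$ followed by applying Border's theorem to the superlevel sets $S_i^v$ at each level $v$ and integrating, and the ``if'' direction by specializing to indicator virtual values to recover Border's condition. (Incidentally, your version is slightly cleaner: the paper's displayed layer-cake step has a typo---an extraneous $x_i(t)$ in the inner integrand---which you correctly omit.)
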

This theorem subsumes Border's theorem. To see this, let $\overline{v}=1$, and
$x_i(t) = 1,  t\in S_i$, and $x_i(t) = 0,  t\notin S_i$.
\begin{proof}
If the interim allocation rule is implementable, we have
\begin{eqnarray}
\sum_i\int^b_{0}q_i(t)f_i(t)x_i(t)dt&=&\int^{+\infty}_0\sum_i\int_{x_i(t)\geq v} q_i(t) f_i(t) x_i(t)dtdv\nonumber\\
&\leq&\int^{+\infty}_0 1-\prod_i(1-\int_{x_i(t)\geq v}f_i(t)dt)dv\label{eq0}
\end{eqnarray}
The inequality follows from Border's theorem.
For the opposite direction,
since $q_i$ satisfies Border's condition by setting $x_i(t)=1$,
the interim allocation rule is implementable by Border's Theorem.
\end{proof}

From now on, we restrict to the i.i.d. bidders case.
The following theorem states that in the optimal solution of the {\em relaxed seller's problem} that only considers Border feasibility constraints, without IC and IR constraints\footnote{This problem is called the relaxed (seller's) problem throughout the paper.}, interim allocation probability $q(t)$ is monotone with respect to virtual value $x(t)$.
To formally state our result, the monotone property must be stated with respect to a nonzero measure.
\begin{theorem}
\label{monoborder}
In the i.i.d. bidders case, there exists a feasible mechanism that implements the interim allocation rule
$q$ if and only if
for any $x(t)\in[0,\overline{v}]$,
$$\int^{b}_0q(t)f(t)x(t)dt\leq \int^{\overline{v}}_{0}1-(1-\int_{x(t)\geq v}f(t)dt)^ndv$$
When the bound is achieved, for any set $\mathcal{C},\mathcal{D}$ with nonzero measure,
\begin{eqnarray}
\min_{t\in \mathcal{C}}\{x(t)\}> \max_{t\in \mathcal{D}}\{x(t)\}\Rightarrow \frac{\int_{t\in\mathcal{C}}q(t)f(t)dt}{\int_{t\in\mathcal{C}}f(t)dt} \geq \frac{\int_{t\in\mathcal{D}}q(t)f(t)dt}{\int_{t\in\mathcal{D}}f(t)dt}\nonumber
\end{eqnarray}
\end{theorem}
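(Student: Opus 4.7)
The plan is to handle the biconditional first, then the monotonicity. The biconditional is essentially a restatement of Lemma~\ref{generalborder} in the i.i.d.\ setting: the forward direction follows by setting $x_i=x$, $q_i=q$, $f_i=f$; the reverse follows from Border's theorem by taking $x(t)=\mathbf{1}_S(t)$ for each test set $S$. This part is pure bookkeeping.

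For the monotonicity, my key observation is that if the bound is achieved, Border must be tight on every upper-level set of $x$ almost everywhere in the level. Writing $x(t)=\int_0^{\overline{v}}\mathbf{1}\{x(t)\ge v\}\,dv$ and invoking Fubini, the theorem's bound arises from applying Border's inequality at each level set $S_v:=\{t:x(t)\ge v\}$ (with $p(v):=\int_{S_v}f$) and integrating in $v$. Since $n\int_{S_v}qf\le 1-(1-p(v))^n$ holds pointwise in $v$, equality of the two $v$-integrals forces equality at a.e.\ $v\in[0,\overline{v}]$.

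Given $\mathcal{C},\mathcal{D}$ with $\min_{\mathcal{C}}x>\max_{\mathcal{D}}x$, I pick an intermediate $v^*\in(\max_{\mathcal{D}}x,\min_{\mathcal{C}}x)$ at which tightness holds; such $v^*$ exists because the gap has positive length while the exceptional set has measure zero. At this $v^*$, $\mathcal{C}\subseteq S_{v^*}$ and $\mathcal{D}\cap S_{v^*}=\emptyset$. Applying Border's inequality to $S_{v^*}\setminus\mathcal{C}$ and to $\mathcal{D}\cup S_{v^*}$, and subtracting from the tight identity at $S_{v^*}$ in each case, yields
$$
n\!\int_{\mathcal{C}} qf \;\ge\; (1-p(v^*)+|\mathcal{C}|_f)^n - (1-p(v^*))^n,\qquad n\!\int_{\mathcal{D}} qf \;\le\; (1-p(v^*))^n - (1-p(v^*)-|\mathcal{D}|_f)^n,
$$
where $|\cdot|_f$ denotes $f$-measure. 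Dividing by $n|\mathcal{C}|_f$ and $n|\mathcal{D}|_f$ respectively and applying the mean value theorem to $z\mapsto z^n$ gives lower and upper bounds of the form $\eta^{n-1}$ and $\xi^{n-1}$ with $\eta>1-p(v^*)>\xi$; since $z\mapsto z^{n-1}$ is increasing on $[0,1]$, the $\mathcal{C}$-average is at least $(1-p(v^*))^{n-1}$ while the $\mathcal{D}$-average is at most $(1-p(v^*))^{n-1}$, which is the desired inequality.

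The most delicate step is the promotion of integrated equality to pointwise a.e.\ equality and the choice of a tight $v^*$ inside the open gap; once this measure-theoretic routine is in place, the rest of the argument is a clean algebraic squeeze using Border's inequality for two ``nearby'' sets together with the tight level identity at $S_{v^*}$.
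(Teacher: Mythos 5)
Your proof is correct and follows essentially the same route as the paper: Fubini to disintegrate the bound into level-set Border inequalities, tightness at a level set $S_{v^*}$ that separates $\mathcal{C}$ from $\mathcal{D}$, and then the three-set subtraction argument (tight at $S_{v^*}$, one-sided at $S_{v^*}\setminus\mathcal{C}$ and $S_{v^*}\cup\mathcal{D}$) followed by comparing the resulting difference quotients of $z\mapsto z^n$. The only cosmetic differences are that the paper takes $v^*=\min_{\mathcal{C}}x$ and asserts level-set tightness for all $v$ outright, whereas you select a tight $v^*$ strictly inside the open gap via the a.e.\ argument (a slightly more careful handling of the measure-theoretic step), and you invoke the mean value theorem where the paper uses the algebraic factoring $a^n-b^n=(a-b)\sum_k a^k b^{n-1-k}$ — these yield the same comparison.
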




\section{Warm-up: the single bidder case}

In this section, we study the single bidder case. This case is easier because the feasibility constraint is simply $q^1(t),q^2(t)\in[0,1]$. Another important reason why this case is easier is that the lowest utility point is always at the lowest type (i.e., the lowest valuation of the first item), avoiding complexities brought by IR and by optimizing the location of the lowest utility point.

\begin{lemma}
In the optimal mechanism, $u(0)=0$.
\end{lemma}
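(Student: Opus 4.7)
The plan is a proof by contradiction: suppose the optimum has $u(0) > 0$ and then exhibit another feasible mechanism with strictly larger revenue. First I would reduce to the case $\min_t u(t) = 0$. If $m := \min_t u(t) > 0$, subtracting $m$ from $u$ (equivalently raising every payment by $m$) leaves the slope $u' = q^1 - q^2/a$ unchanged, hence preserves the allocation, BIC (convexity), and IR, while strictly raising revenue by $m$ through the $-u$ term in the objective. So at any optimum some $t^* \in [0,b]$ satisfies $u(t^*) = 0$, and the hypothesis $u(0) > 0$ forces $t^* > 0$.

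Next I would exploit convexity: since $u$ is convex and $u(0) > u(t^*) = 0$, the slope satisfies $u' \leq 0$ on $[0, t^*]$, equivalently $q^1(t) \leq q^2(t)/a$ there. The key modification is to replace $(q^1(t), q^2(t))$ on $[0, t^*]$ by $(\tilde q^1(t), \tilde q^2(t)) := (q^2(t)/a,\, q^2(t))$, saturating this inequality, while leaving the allocation on $[t^*, b]$ untouched. Feasibility holds because $\tilde q^1 = q^2/a \leq 1/a \leq 1$ using $a \geq 1$. The new slope is $\tilde u' \equiv 0$ on $[0, t^*]$ and equals $u'$ on $[t^*, b]$; this is non-decreasing since $u'(t^{*+}) \geq 0$ by the original convexity, so $\tilde u$ is convex. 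Setting $\tilde u \equiv 0$ on $[0, t^*]$ and $\tilde u = u$ on $[t^*, b]$ yields $\tilde u \geq 0$ everywhere, so IR is preserved, and by construction $\tilde u(0) = 0$.

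Finally I would compare revenues. The two mechanisms agree on $[t^*, b]$, and on $[0, t^*]$ a direct computation using $t\cdot(q^2/a) + ((b-t)/a)\cdot q^2 = (b/a)\,q^2$ gives
\[ \widetilde{REV} - REV \;=\; \int_0^{t^*} \bigl[\, t\,(q^2(t)/a - q^1(t)) \;+\; u(t)\, \bigr]\, f(t)\, dt. \]
All three factors $t$, $q^2/a - q^1$, and $u$ are non-negative on $[0, t^*]$, so the integrand is non-negative; at $t = 0$ it equals $u(0) f(0) > 0$, so by continuity the integral is strictly positive, contradicting optimality. Hence $u(0) = 0$.

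The main thing to get right will be the choice of substitute allocation on $[0, t^*]$. A naive modification such as zeroing out both $q^1$ and $q^2$ would also flatten $u$ and enforce $\tilde u(0) = 0$, but it would throw away the revenue that was being earned by allocating item $2$ (and, after checking, does not actually increase revenue). The prescription $\tilde q^1 = q^2/a$, $\tilde q^2 = q^2$ is precisely tailored to zero out the slope while preserving the item-$2$ allocation, which is why the bracketed integrand comes out non-negative term by term.
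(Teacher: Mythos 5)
Your proof is correct and its skeleton matches the paper's: reduce to the existence of a global minimizer $t^*$ with $u(t^*)=0$, modify the allocation on $[0,t^*]$ so that the utility is flattened to zero there, verify feasibility/BIC/IR of the modified mechanism, and exhibit a strict revenue gain when $u(0)>0$. The difference is in the replacement allocation and in how the gain is certified. You raise $q^1$ to $q^2/a$ (just enough to kill the negative slope) while leaving $q^2$ alone, which lets you compare revenues directly via the termwise-nonnegative integrand $t\,(q^2(t)/a-q^1(t))+u(t)$. The paper instead jumps to the maximal pair $(q^1,q^2,u)=(1/a,1,0)$, justified by first showing that the payment on $[0,t^*]$ is capped at $b/a$ and that the cap is attained only at that triple; this is slightly heavier for the lemma alone, but it buys more, since it characterizes the optimal allocation on $[0,t^*]$ outright (a fact the paper reuses when writing down the single-bidder menu). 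One small point worth flagging: your step ``at $t=0$ it equals $u(0)f(0)>0$, so by continuity the integral is strictly positive'' quietly assumes $f>0$ near $0$. That is a standing implicit assumption throughout the paper (its virtual-value expressions divide by $f$), so you are not in trouble, but it is the place where strictness actually enters your argument.
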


\begin{proof}
If $u(t)>0, \forall t\in[0,b]$, we can always decrease the buyer utility function by simultaneously increasing the payment of every menu.
We assume $u(t^*)=0, t^*\in[0,b]$.

We first prove the payment when $t\in[0,t^*]$ must be less than or equal to $\frac{b}{a}$.
\begin{eqnarray}
pay(t)&=&tq^1(t)+\frac{b-t}{a}q^2(t)-u(t)\leq t(q^1(t)-\frac{1}{a}q^2(t))+\frac{b}{a}q^2(t)-0\nonumber\\
&\leq&0+\frac{b}{a}q^2(t)\leq\frac{b}{a}\nonumber
\end{eqnarray}
The first inequality follows from $u(t)\geq 0$.
The second inequality follows from the fact $q^1(t)-\frac{1}{a}q^2(t)=u'(t)\leq0$ when $t\in[0,t^*]$, since $u(t)$ is convex.
The last inequality follows from $q^2(t)\leq 1$.
The upper bound $b/a$ is achieved only when $(q^1(t),q^2(t),u(t))=(1/a,1,0)$.

For any $q^1(t),q^2(t)$ and $u(t)$, consider the following allocation and utility function:
\begin{displaymath}
(\overline{q}^1,\overline{q}^2,\overline{u}(t)) = \left\{ \begin{array}{ll}
(1/a,1,0) & t\in[0,t^*]\\
(q^1(t),q^2(t),u(t)) & t\in(t^*,b]
\end{array} \right.
\end{displaymath}

This is indeed achieved by a mechanism where the payment is $b/a,~t\in[0,t^*]$, which is the largest possible payment for this interval as proved earlier.
The payment remains the same for $t\in(t^*,b]$. As a result, the modified mechanism generates strictly more revenue than the original mechanism,
except when they are the same.

Furthermore, this modified mechanism clearly satisfies IR constraint by its definition and IR of the original mechanism.
We complete the proof by showing it is IC. Since $u(t^*)=0$ and $u(t)\geq 0,~t\in[0,b]$, we have $u'(t^*)=0$.
Since $u$ is convex, we have $u''(t)\geq 0,~t\in(t^*,b]$.
By definition, $u''(t)=0, t\in[0,t^*]$. Combined together $\overline{u}''(t)\geq 0,~t\in[0,b]$, i.e., IC is satisfied.
Thus the optimal mechanism must be in this form. In particular, $u(0)=0$.
\end{proof}

\subsection{The optimal solution}

Because $u(0)=0$, we can rewrite the revenue formula as follows,
\begin{eqnarray}
REV&=&\int^b_0[tq^1(t)+\frac{b-t}{a}q^2(t)-\int^{t}_0(q^1(s)-\frac{1}{a}q^2(s))ds-u(0)]f(t)dt\nonumber\\
&=&\int^b_0(q^1(t)-\frac{1}{a}q^2(t))(tf(t)-1+F(t))+\frac{b}{a}f(t)q^2(t)dt\nonumber
\end{eqnarray}
Let $h(t)=tf(t)+F(t)-1$ and apply the ironing technique of Myerson.
\begin{definition}\cite{Myerson81}
For any $z\in[0,1]$, let
$$H(z)=\int^{F^{-1}(z)}_{0}h(t)dt$$
and let $H^{ir}(z)$  be the convex hull of $H(z)$, i.e., the largest convex function that is less than or equal to $H(z)$.
We define $h^{ir}(t), t\in[0,b]$ such that $$H^{ir}(z)=\int^{F^{-1}(z)}_{0}h^{ir}(t)dt$$
\end{definition}

\begin{lemma}
The function $\frac{\partial H^{ir}}{\partial z}(z)$ is weakly increasing in $z$ and the ironed virtual value function $\frac{h^{ir}(t)}{f(t)}$ is weakly increasing in $t$.
\footnote{$\frac{\partial H^{ir}}{\partial z}(z)$ exists for almost every point except for some breaking points.}
\label{derivative}
\end{lemma}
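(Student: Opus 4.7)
The plan is to deduce both claims directly from the convex-hull definition of $H^{ir}$ together with a chain-rule identification between $\partial H^{ir}/\partial z$ and $h^{ir}/f$.

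First I would argue the monotonicity of $\partial H^{ir}/\partial z$ as the trivial consequence of convexity. By construction $H^{ir}$ is the largest convex function dominated by $H$ on $[0,1]$, so $H^{ir}$ is itself convex. The right derivative (equivalently, the a.e.\ derivative) of a convex function on an interval is weakly increasing, which gives the first assertion modulo the countably many breaking points that the footnote already excuses.

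Next I would translate this back to the $t$-variable. Differentiating the defining identity $H^{ir}(z)=\int_0^{F^{-1}(z)} h^{ir}(t)\,dt$ in $z$ and using $(F^{-1})'(z)=1/f(F^{-1}(z))$ yields
\begin{equation*}
\frac{\partial H^{ir}}{\partial z}(z) \;=\; \frac{h^{ir}(F^{-1}(z))}{f(F^{-1}(z))}.
\end{equation*}
Setting $t=F^{-1}(z)$, so $z=F(t)$, this is precisely $h^{ir}(t)/f(t)$. Since $F$ is weakly increasing, $t_1\le t_2$ implies $F(t_1)\le F(t_2)$, and combining with monotonicity of $\partial H^{ir}/\partial z$ in $z$ gives $h^{ir}(t_1)/f(t_1)\le h^{ir}(t_2)/f(t_2)$.

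The only mild subtlety is ensuring the representation $h^{ir}(t) = \frac{d}{dt} H^{ir}(F(t))$ is well-defined at points where $H^{ir}$ fails to be differentiable; but these form a countable (and hence measure-zero) set of breaking points of the convex hull, and the statement is understood a.e., so this is not a genuine obstacle. In short, the whole proof is essentially: convex hull $\Rightarrow$ convex $\Rightarrow$ monotone derivative, transported to the $t$-axis through $F^{-1}$.
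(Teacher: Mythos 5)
Your proof is correct and follows essentially the same route as the paper's: convexity of $H^{ir}$ gives monotonicity of $\partial H^{ir}/\partial z$, and the chain-rule identity $\partial H^{ir}/\partial z(z) = h^{ir}(F^{-1}(z))/f(F^{-1}(z))$ transports that monotonicity to $t$ via the increasing map $z = F(t)$.
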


It may be helpful to think of $\frac{h^{ir}(t)}{f(t)}$ as the {\em virtual value} of type $t$ in our setting. The above lemma claims that the virtual value is indeed monotone after ironing.

\begin{lemma}
We have $\int^b_0(q^1(t)-\frac{q^2(t)}{a})h(t)dt\leq \int^b_0(q^1(t)-\frac{q^2(t)}{a})h^{ir}(t)dt$.
If for almost every $t$,
we have either $H(F(t))=H^{ir}(F(t))$ or
$(q^1(t)-\frac{q^2(t)}{a})'=0$, then the inequality becomes equality.
\label{ironeq}
\end{lemma}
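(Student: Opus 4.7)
The plan is to mimic the standard Myerson ironing argument, but with $q^1(t)-q^2(t)/a = u'(t)$ playing the role of the monotone allocation. Set $g(t) := q^1(t) - q^2(t)/a$. Since $u$ is convex (IC), $g = u'$ is weakly increasing in $t$.

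First I would change variables via $z = F(t)$ in both integrals. Using $H(z) = \int_0^{F^{-1}(z)} h(s)\,ds$ (and the analogous identity for $H^{ir}$), the substitution gives
\begin{equation*}
\int_0^b g(t)\,h(t)\,dt = \int_0^1 g(F^{-1}(z))\, H'(z)\,dz,
\qquad
\int_0^b g(t)\,h^{ir}(t)\,dt = \int_0^1 g(F^{-1}(z))\,(H^{ir})'(z)\,dz.
\end{equation*}
Then I would integrate by parts (in the Riemann--Stieltjes sense, which is justified because $g\circ F^{-1}$ is monotone and hence of bounded variation). The boundary terms vanish since $H(0)=H^{ir}(0)=0$ trivially and $H(1)=\int_0^b h(t)\,dt = [tF(t)]_0^b - b = 0$ (so $H^{ir}(1)=0$ as the convex hull agrees with $H$ at the endpoints). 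This yields
\begin{equation*}
\int_0^b g(t)\,h(t)\,dt = -\int_0^1 H(z)\, d\bigl(g(F^{-1}(z))\bigr),
\qquad
\int_0^b g(t)\,h^{ir}(t)\,dt = -\int_0^1 H^{ir}(z)\, d\bigl(g(F^{-1}(z))\bigr).
\end{equation*}

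Now the inequality follows immediately: $g\circ F^{-1}$ is weakly increasing in $z$, so $d(g(F^{-1}(z)))$ is a nonnegative measure, and $H^{ir}(z)\le H(z)$ by definition of the convex hull. Therefore $-\int_0^1 H^{ir}\,d(g\circ F^{-1}) \geq -\int_0^1 H\,d(g\circ F^{-1})$, which is exactly the claimed inequality.

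For the equality condition, equality in the last step requires the nonnegative measure $d(g(F^{-1}(z)))$ to be concentrated on $\{z : H(z) = H^{ir}(z)\}$. Pulling this back to the $t$-variable via $z = F(t)$, it says that on the complement of $\{t : H(F(t)) = H^{ir}(F(t))\}$, the function $g(t) = q^1(t) - q^2(t)/a$ must be constant, i.e.\ $(q^1(t)-q^2(t)/a)' = 0$ almost everywhere there. That is precisely the disjunction stated in the lemma. The main thing to be careful about is the rigor of integration by parts when $g$ is merely monotone (hence possibly discontinuous), but this is handled by the Riemann--Stieltjes / Lebesgue--Stieltjes formulation, and monotone $g$ is automatically of bounded variation so no regularity beyond convexity of $u$ is needed.
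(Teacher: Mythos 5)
Your proof is correct and takes essentially the same route as the paper: integrate by parts so that $H$ and $H^{ir}$ appear against a nonnegative measure induced by the monotone function $q^1-q^2/a$, then use $H^{ir}\le H$ and examine when the difference is killed. The only cosmetic differences are that you change variables to $z=F(t)$ before integrating by parts (the paper works directly in $t$ and compares the boundary terms rather than showing they vanish via $H(1)=0$), and you phrase the measure as a Lebesgue--Stieltjes differential $d\bigl(g\circ F^{-1}\bigr)$ rather than writing $g'$ in the integrand, which is a small gain in rigor when $g$ is monotone but not differentiable.
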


Similar to Myerson's analysis, the above lemma states that ironing does not hurt revenue, i.e., the ironed revenue achieves the revenue upper bound computed as if $h$ is monotone.

\begin{proof}
\begin{eqnarray}
&&\int^b_0(q^1(t)-\frac{q^2(t)}{a})h(t)dt=\int^b_0(q^1(t)-\frac{q^2(t)}{a})dH(F(t))\nonumber\\
&=&-\int^b_0(q^1(t)-\frac{q^2(t)}{a})'H(F(t)) dt + (q^1(t)-\frac{q^2(t)}{a})\int^{t}_{0}h(s)ds|^b_0\label{eq8}\\
&\leq&-\int^b_0(q^1(t)-\frac{q^2(t)}{a})'H^{ir}(F(t)) dt + (q^1(t)-\frac{q^2(t)}{a})H^{ir}(F(t))|^b_0\label{eq3}\\
&=&\int^b_0(q^1(t)-\frac{q^2(t)}{a})dH^{ir}(F(t))=\int^b_0(q^1(t)-\frac{q^2(t)}{a})h^{ir}(t)dt\nonumber
\end{eqnarray}
The first term in (\ref{eq8}) is less than or equal to the first term in (\ref{eq3}). The second terms in (\ref{eq8}) and (\ref{eq3}) are equal.
The inequality becomes equality when $(q^1(t)-\frac{q^2(t)}{a})'H(F(t))=(q^1(t)-\frac{q^2(t)}{a})'H^{ir}(F(t))$ fails on zero measure.
\end{proof}
Going back to the problem, it becomes
\begin{eqnarray}
Maximize && REV =\int^b_0(q^1(t)-\frac{1}{a}q^2(t))h(t)+\frac{b}{a}f(t)q^2(t)dt\nonumber\\
s.t. &&q^1(t),q^2(t)\in[0,1]\nonumber\\
&& q^1(t)-\frac{q^2(t)}{a}\geq 0,(q^1(t)-\frac{q^2(t)}{a})'\geq 0\nonumber
\end{eqnarray}
By Lemma~\ref{ironeq}, we have
$$REV\leq\int^{b}_{0}(q^1(t)-\frac{q^2(t)}{a})h^{ir}(t)dt+\frac{b}{a}f(t)q^2(t)dt$$
Let $s=\max_t\{t|h^{ir}(t)=0\}$, then $(q^1(t)-\frac{q^2(t)}{a})h^{ir}(t)\leq 0$, we get
\begin{eqnarray}
REV\leq\int^{b}_{s}(q^1(t)-\frac{q^2(t)}{a})h^{ir}(t)dt+\frac{b}{a}f(t)q^2(t)dt+\int^{s}_{0}\frac{bf(t)}{a}q^2(t)dt\label{eqb2}
\end{eqnarray}
Since $q^1(t)\leq 1, t\in[s,b]$ and $q^2(t)\leq 1, t\in[0,s]$, we have
\begin{eqnarray}
REV&\leq&\int^{b}_{s}(1-\frac{q^2(t)}{a})h^{ir}(t)+\frac{bf(t)}{a}q^2(t)dt+\int^{s}_{0}\frac{bf(t)}{a}dt\label{eqb1}\\
&=&\int^{b}_{s}h^{ir}(t)dt+\int^{s}_{0}\frac{bf(t)}{a}dt+\int^{b}_{s}\frac{bf(t)-h^{ir}(t)}{a}q^2(t)dt\nonumber
\end{eqnarray}
The coefficient of $q^2$ is nonnegative, since
\begin{eqnarray}
\frac{h^{ir}(t)}{f(t)}&=&\frac{\partial H^{ir}}{\partial t}(F(t))\leq\max_t{}\frac{\partial H^{ir}}{\partial t}(F(t))\leq\max_t{}\frac{\partial H}{\partial t}(F(t))\nonumber\\
&=&\max_t\frac{h(t)}{f(t)}=\max_{t}\{t+\frac{F(t)-1}{f(t)}\}\leq b\nonumber
\end{eqnarray}
Since $q^2(t)\in[0,1], t\in[s,b]$,
\begin{eqnarray}
REV\leq \int^{b}_{s}h^{ir}(t)dt+\int^{s}_{0}\frac{bf(t)}{a}dt+\int^{b}_{s}\frac{bf(t)-h^{ir}(t)}{a}dt\label{eqb3}
\end{eqnarray}
Construct $q^1$ and $q^2$ as follows,
\begin{displaymath}
(q^1(t),q^2(t)) = \left\{ \begin{array}{ll}
(1/a,1) & t\in[0,s]\\
(1,1) & t\in(s,b]
\end{array} \right.
\end{displaymath}
Using this allocation rule,  inequalities (\ref{eqb2}) (\ref{eqb1}) (\ref{eqb3}) become equalities.

Remember $H^{ir}(\cdot)$ is the convex hull of $H(\cdot)$.
If $H(F(t))>H^{ir}(F(t))$, curve $(z,H^{ir}(z)), z\in[F(t)-\epsilon, F(t)+\epsilon]$ must be a straight line for some $\epsilon$.
Thus $\partial \frac{h^{ir}(t)}{f(t)}/{\partial t}=0$.

When $ \frac{h^{ir}(t)}{f(t)}$ is equal for any two types $s_1$ and $s_2$,
$s_1$ and $s_2$ must be both larger than $s$, or both less than or equal to $s$.
By definition, $q^1(t)-\frac{q^2(t)}{a}$ is equal for the $s_1$ and $s_2$.
Thus the ironing inequality (\ref{eq3})  becomes an equality as well.

Then we get utility function and payment function:
\begin{displaymath}
u(t) = \left\{ \begin{array}{ll}
0, & t\in[0,s]\\
\frac{a-1}{a}(t-s), & t\in(s,b]
\end{array} \right.
\qquad
pay(t) = \left\{ \begin{array}{ll}
\frac{b}{a}, & t\in[0,s]\\
\frac{b}{a}+\frac{(a-1)s}{a}, & t\in(s,b]
\end{array} \right.
\end{displaymath}

\begin{theorem}
In the single-bidder case, the optimal mechanism consists of 2 menus:
the first is a randomized bundle $(1/a, 1, b/a)$ and the second is a deterministic bundle $(1, 1, \frac{b}{a}+\frac{(a-1)s}{a})$,
where $s=\max_t\{t|h^{ir}(t)=0\}$.
\end{theorem}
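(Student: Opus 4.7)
The plan is to verify that the two-menu mechanism described in the theorem is precisely the (allocation, utility, payment) triple derived in the preceding analysis, and that this triple saturates every inequality in the revenue upper bound (\ref{eqb2})--(\ref{eqb3}), making it optimal among BIC and IR mechanisms.

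First I would check that the menu description induces the claimed interim allocation. A type $t\in[0,s]$ obtains utility $t/a+(b-t)/a-b/a=0$ from the randomized bundle menu $(1/a,1,b/a)$ and utility $(a-1)(t-s)/a\leq 0$ from the grand bundle menu $(1,1,\,b/a+(a-1)s/a)$, so it strictly prefers the first menu; a type $t\in(s,b]$ strictly prefers the second by the reverse comparison. Thus the selected allocation is
\begin{displaymath}
(q^1(t),q^2(t))=\begin{cases}(1/a,1) & t\in[0,s]\\ (1,1) & t\in(s,b]\end{cases}
\end{displaymath}
with interim utility $u(t)=\max\{0,(a-1)(t-s)/a\}$, which is convex (BIC) and nonnegative (IR).

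Next I would chase this allocation through the chain of inequalities in the excerpt. The bound $q^2(t)\leq 1$ used to derive (\ref{eqb1}) is tight on $[0,s]$, the bound $q^1(t)\leq 1$ used on $[s,b]$ is tight, and the final inequality (\ref{eqb3}) requires $q^2(t)=1$ on $(s,b]$, which also holds. For the ironing inequality in Lemma~\ref{ironeq} to become an equality, one needs $(q^1-q^2/a)'=0$ wherever $H(F(t))>H^{ir}(F(t))$; since $q^1-q^2/a$ is piecewise constant on $[0,s]$ and on $(s,b]$ (taking values $0$ and $(a-1)/a$ respectively), its derivative vanishes almost everywhere, so Lemma~\ref{ironeq} delivers equality unconditionally. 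The payment at each type follows from $pay(t)=tq^1(t)+(b-t)q^2(t)/a-u(t)$ and yields the two posted prices claimed.

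The main obstacle is ensuring the threshold $s=\max\{t:h^{ir}(t)=0\}$ is the \emph{correct} breakpoint for both menus simultaneously. On $[0,s]$ the ironed virtual value is nonpositive, so maximizing the coefficient of $q^2$ (which is $bf(t)/a$) dictates $q^2=1$, and $q^1-q^2/a\geq 0$ then forces $q^1\geq 1/a$, achieved with equality to minimize surplus leakage against nonpositive $h^{ir}$. On $(s,b]$ the coefficient $bf(t)/a-h^{ir}(t)$ of $q^2$ is nonnegative (this is exactly the bound $h^{ir}/f\leq b$ established in the excerpt via $\partial H^{ir}/\partial t\leq \partial H/\partial t$), so $q^2=1$ remains optimal, and $h^{ir}(t)\geq 0$ drives $q^1=1$. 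Once these monotonicity and sign checks are in hand, the two-menu description collapses out of the construction.
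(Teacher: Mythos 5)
Your proposal follows the same overall route as the paper: verify that the two-menu mechanism induces the allocation $(q^1,q^2)=(1/a,1)$ on $[0,s]$ and $(1,1)$ on $(s,b]$, check BIC and IR directly from the piecewise utility, and chase the allocation through the chain (\ref{eqb2})--(\ref{eqb3}) to show each inequality is tight. The tightness checks for $q^1\leq 1$ on $(s,b]$, $q^2\leq 1$ on $[0,b]$, and for dropping the nonpositive term $(q^1-q^2/a)h^{ir}$ on $[0,s]$ are all correct, as is the payment computation.

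The one genuine gap is in your treatment of the ironing equality. You write that because $q^1-q^2/a$ is piecewise constant, ``its derivative vanishes almost everywhere, so Lemma~\ref{ironeq} delivers equality unconditionally.'' That conclusion does not follow. In the integration-by-parts step behind Lemma~\ref{ironeq}, $(q^1-q^2/a)'$ must be read as a Stieltjes measure; for your allocation it is the point mass $\frac{a-1}{a}\,\delta_s$, not zero, and the relevant terms are $\frac{a-1}{a}\,H(F(s))$ versus $\frac{a-1}{a}\,H^{ir}(F(s))$. Equality in (\ref{eq3}) therefore hinges on the nontrivial fact that $H(F(s))=H^{ir}(F(s))$, i.e.\ that $s$ is a touching point of the convex hull. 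This is not automatic. The paper closes exactly this gap: it observes that wherever $H(F(t))>H^{ir}(F(t))$ the ironed virtual value $h^{ir}/f$ is locally constant, and that any level set of $h^{ir}/f$ must lie entirely on one side of $s$ (for if an ironing interval straddled $s$, some type strictly above $s$ would also have ironed virtual value $0$, contradicting $s=\max\{t: h^{ir}(t)=0\}$). Consequently the single jump of $q^1-q^2/a$ occurs where $H$ and $H^{ir}$ coincide, and only then does Lemma~\ref{ironeq} give equality. You should supply this argument rather than appeal to the lemma ``unconditionally.''
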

\section{The n-bidder case}

As mentioned, this case is more complicated for at least two reasons: Border feasibility, as well as the fact that the zero utility point may not be at the lowest type.
For any $t^*\in[0,b]$,
we first rewrite the revenue formula as a function of $t^*$ and $u(t^*)$.
Later we will pick $t^*$ such that it is the zero utility point.

\subsection{Revenue formula}
\begin{eqnarray}
\frac{REV}{n}&=&\int^b_0[t\cdot q^1(t)+\frac{b-t}{a}\cdot q^2(t)-u(0)-\int^t_0(q^1(s)-\frac{q^2(s)}{a})ds]\cdot f(t) dt\nonumber\\
&=&\int^b_0(q^1(t)-\frac{q^2(t)}{a})(tf(t)+F(t)-1)+q^2(t)\cdot\frac{bf(t)}{a}dt-u(0)\nonumber\\
&=&\int^b_0(q^1(t)-\frac{q^2(t)}{a})h(t,t^*)+q^2(t)\cdot\frac{bf(t)}{a}dt-u(t^*)\label{eq1}
\end{eqnarray}
Where,
\begin{displaymath}
h(t,t^*) = \left\{ \begin{array}{ll}
tf(t)+F(t) & t\leq t^*\\
tf(t)+F(t)-1 & t>t^*
\end{array} \right.
\end{displaymath}

\subsection{Ironing}

By separating $q^1$ and $q^2$, revenue formula (\ref{eq1}) can be rewritten as
$$\int^b_0 q^1(t)f(t)\frac{h(t,t^*)}{f(t)}+q^2(t)f(t)(\frac{b}{a}-\frac{h(t,t^*)}{af(t)})dt-u(t^*).$$
Now consider the relaxed maximization problem on the integral with only Border's feasibility.
Since for different $t^*$, the relaxed problem is different.
Let $q^1(t,t^*)$ and $q^2(t,t^*)$ denote the solution with respect to $t^*$.
Let $u(t,t^*)$ denote a still unfixed utility function while keeping $\frac{\partial u(t,t^*)}{\partial t}=q^1(t,t^*)-\frac{q^2(t,t^*)}{a}, t\in[0,b]$.
By Theorem \ref{monoborder}, in the optimal solution,
$q^1(t,t^*)$ and $-q^2(t,t^*)$ are monotone with respect to $\frac{h(t,t^*)}{f(t)}$.
Thus
$q^1(t,t^*)-\frac{q^2(t,t^*)}{a}$ is monotone with respect to $\frac{h(t,t^*)}{f(t)}$.
However $\frac{h(t,t^*)}{f(t)}$ might not be monotone in $t$,
then $q^1(t,t^*)-\frac{q^2(t,t^*)}{a}$ might not be monotone in $t$,
violating BIC.

To satisfy BIC, we hope the optimal solution of $q^1(t,t^*)$ and $-q^2(t,t^*)$ to be both monotone in $t$.
We apply a generalized ironing procedure to the coefficient function $h(t,t^*)$,
ensuring ironed virtual value $\frac{h^{ir}(t,t^*)}{f(t)}$ to be monotone in $t$.
After ironing, equation (\ref{eq1}) becomes
$\int^b_0 q^1(t)f(t)\frac{h^{ir}(t,t^*)}{f(t)}+q^2(t)f(t)(\frac{b}{a}-\frac{h^{ir}(t,t^*)}{af(t)})dt-u(t^*)$.
By Theorem~\ref{monoborder}, the upper bound is achieved when
$q^1(t,t^*)$ and $-q^2(t,t^*)$ are monotone with respect to $\frac{h^{ir}(t,t^*)}{f(t)}$.
Hence, $q^1(t,t^*)$ and $-q^2(t,t^*)$ are monotone in $t$ and BIC is satisfied.
When $\frac{h^{ir}(t,t^*)}{f(t)}$ is same, $q^1$, $q^2$ are not fixed, the monotonicity will be clarified in the next section.
\footnote{There is an exception when $\frac{h^{ir}(t,t^*)}{f(t)}$ is the same for several types, In this case, $q^1$ and $q^2$ are not fixed. How to determine $q^1$ and $q^2$ in this case and to prove monotonicity will be clarified in detail in the next section.}

Define the following ironing function with respect to any given type $t^*$.

\begin{definition}
For any $z\in[0,1]$, let
$$H(z,t^*)=\int^{F^{-1}(z)}_{0}h(t,t^*)dt$$ $H^{ir}(z,t^*)$ is the convex hull of  $H(z,t^*)$, i.e., the largest convex function that is less than or equal to $H(z,t^*)$.
We define $h^{ir}(s,t^*)$ such that $$H^{ir}(z,t^*)=\int^{F^{-1}(z)}_{0}h^{ir}(t,t^*)dt$$
\end{definition}

The following two lemmas are exactly the same as the single bidder case.
The second part of Lemma \ref{lemmaa1} states that, when certain conditions hold,
one can replace the virtual value function $h(t,t^*)$ by the ironed one $h^{ir}(t,t^*)$ in the revenue formula and achieve the same revenue upper bound.
However, in the n-bidder case, when the conditions hold, there may be many possible solutions of $q^1(t,t^*)$ and $q^2(t,t^*)$. In the next section, we will pick a specific pair of $q^1(t,t^*)$ and $q^2(t,t^*)$.

\begin{lemma}
\label{obs1}
The function $\frac{\partial H^{ir}}{\partial z}(z,t^*)$ is weakly increasing in $z$ and the virtual value function $\frac{h^{ir}(t,t^*)}{f(t)}$ is weakly increasing in $t$.
\end{lemma}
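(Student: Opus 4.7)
The plan is to deduce both monotonicity claims directly from the convexity of $H^{ir}(\cdot, t^*)$ in its first argument, exactly in the spirit of Myerson's classical ironing argument; the only novelty is that the parameter $t^*$ is carried along as a spectator throughout.

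For the first claim, I would note that $H^{ir}(\cdot, t^*)$ is, by its very definition, the convex hull of $H(\cdot, t^*)$, hence is convex on $[0,1]$. A convex function on an interval has a weakly increasing derivative wherever it is differentiable, and this is precisely the statement that $\frac{\partial H^{ir}}{\partial z}(z,t^*)$ is weakly increasing in $z$. (At the countably many breakpoints where $H^{ir}$ fails to be differentiable, one can work with the right derivative, which is defined and weakly increasing everywhere by convexity; the footnote in the statement already flags these measure-zero exceptions.)

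For the second claim I would apply the fundamental theorem of calculus to the defining identity
\begin{equation*}
H^{ir}(z,t^*) \;=\; \int_0^{F^{-1}(z)} h^{ir}(t,t^*)\,dt.
\end{equation*}
Differentiating both sides in $z$ and using $\frac{d}{dz}F^{-1}(z)=1/f(F^{-1}(z))$, I get
\begin{equation*}
\frac{\partial H^{ir}}{\partial z}(z,t^*) \;=\; \frac{h^{ir}(F^{-1}(z),t^*)}{f(F^{-1}(z))}.
\end{equation*}
Substituting $z=F(t)$ yields $\frac{h^{ir}(t,t^*)}{f(t)} = \frac{\partial H^{ir}}{\partial z}(F(t),t^*)$, so the ironed virtual value is obtained as the composition of the weakly increasing map $\frac{\partial H^{ir}}{\partial z}(\cdot, t^*)$ (by the first claim) with the weakly increasing map $t\mapsto F(t)$. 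A composition of weakly increasing functions is weakly increasing, which finishes the argument.

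I do not expect any serious obstacle here. The main thing to be a little careful about is regularity: $H^{ir}$ can fail to be differentiable at the kinks where its convex hull first separates from $H$, so the identity $\frac{h^{ir}(t,t^*)}{f(t)} = \frac{\partial H^{ir}}{\partial z}(F(t),t^*)$ should be read as holding almost everywhere (or with one-sided derivatives at breakpoints), which is enough because $h^{ir}$ itself was only defined through an integral identity. Given that the analogous lemma in the single-bidder case (Lemma~\ref{derivative}) has already been asserted in essentially the same form, the statement here is truly the same proof reread with an extra parameter $t^*$, and no additional idea is needed.
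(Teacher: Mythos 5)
Your proof is correct and matches the paper's approach exactly: the paper explicitly states Lemma~\ref{obs1} is proved ``exactly the same as the single bidder case'' (Lemma~\ref{derivative}), and both use convexity of the hull for the first claim and the chain rule on the integral identity $H^{ir}(z,t^*)=\int_0^{F^{-1}(z)}h^{ir}(t,t^*)dt$ to express $\frac{h^{ir}(t,t^*)}{f(t)}=\frac{\partial H^{ir}}{\partial z}(F(t),t^*)$ for the second.
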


\begin{lemma}
We have $\int^b_0(q^1(t)-\frac{q^2(t)}{a})h(t,t^*)dt\leq \int^b_0(q^1(t)-\frac{q^2(t)}{a})h^{ir}(t,t^*)dt$.
Furthermore, if for almost every $t$ we have either $H(F(t),t^*)=H^{ir}(F(t),t^*)$ or
$(q^1(t)-\frac{q^2(t)}{a})'=0$, then the inequality becomes equality.
\label{lemmaa1}
\end{lemma}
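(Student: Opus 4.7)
The statement is the $n$-bidder analog of Lemma~\ref{ironeq}, and the plan is to mirror that proof, using integration by parts together with the two inputs $H^{ir}(\cdot,t^*)\leq H(\cdot,t^*)$ and $(q^1(t)-q^2(t)/a)'=u''(t)\geq 0$ (the latter being exactly the BIC/convexity condition already used in the seller's problem).

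First I would use the identity $\int_0^t h(s,t^*)\,ds = H(F(t),t^*)$ and integrate by parts on the left-hand side:
$$\int_0^b\!\Big(q^1(t)-\tfrac{q^2(t)}{a}\Big)h(t,t^*)\,dt=\Big(q^1(t)-\tfrac{q^2(t)}{a}\Big)H(F(t),t^*)\Big|_0^b-\int_0^b\!\Big(q^1(t)-\tfrac{q^2(t)}{a}\Big)'H(F(t),t^*)\,dt.$$
The identical identity holds with $H^{ir}$ in place of $H$. Next, to match boundary terms, at $t=0$ both contributions vanish since $H(0,t^*)=H^{ir}(0,t^*)=0$, and at $t=b$ the convex hull agrees with $H$ at the right endpoint of $[0,1]$, i.e.\ $H^{ir}(1,t^*)=H(1,t^*)$. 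Subtracting the two identities therefore leaves
$$\int_0^b\!\Big(q^1(t)-\tfrac{q^2(t)}{a}\Big)\big(h(t,t^*)-h^{ir}(t,t^*)\big)\,dt=\int_0^b\!\Big(q^1(t)-\tfrac{q^2(t)}{a}\Big)'\big(H^{ir}(F(t),t^*)-H(F(t),t^*)\big)\,dt,$$
and the right-hand integrand is pointwise nonpositive, proving the inequality. For the equality clause, observe that this integrand vanishes wherever either $(q^1-q^2/a)'(t)=0$ or $H^{ir}(F(t),t^*)=H(F(t),t^*)$; if one of the two alternatives holds for almost every $t$, the whole integral is zero and the bound is tight.

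The one new subtlety compared to the single-bidder case is the jump discontinuity of $h(\cdot,t^*)$ at $t=t^*$, which is where I expect the reviewer to push back. It is benign, however: the jump merely produces a kink in $H(\cdot,t^*)$, which stays absolutely continuous in $z$, while $q^1-q^2/a=u'$ is absolutely continuous by convexity of $u$, so the integration by parts is valid in the Lebesgue--Stieltjes sense with no stray boundary contribution from $t^*$. All remaining steps are literally those of Lemma~\ref{ironeq}.
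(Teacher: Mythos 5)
Your proof is correct and takes essentially the same route as the paper, which proves the single-bidder version (Lemma~\ref{ironeq}) by the same integration-by-parts argument using $H^{ir}\leq H$, the convexity constraint $(q^1-q^2/a)'=u''\geq 0$, and the fact that the boundary terms at $z=0,1$ match because the convex hull coincides with $H$ at the endpoints; the paper then states that the $n$-bidder version is identical. Your observation about the jump of $h(\cdot,t^*)$ at $t^*$ being absorbed into a kink of $H(\cdot,t^*)$ (so the Stieltjes integration by parts remains valid) is a small extra point of rigor that the paper leaves implicit, but it does not change the argument.
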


With the two lemmas, by Equation (\ref{eq1}), we have
\begin{eqnarray}
\frac{REV}{n}&\leq& \int^b_0(q^1(t)-\frac{q^2(t)}{a})h^{ir}(t,t^*)+q^2(t)\frac{bf(t)}{a}dt-u(t^*)\nonumber\\
&=&\int^b_0q^1(t)h^{ir}(t,t^*)+q^2(t)\frac{bf(t)-h^{ir}(t,t^*)}{a}dt-u(t^*)\nonumber\\
&=&\int^b_0\int^{b}_{t}q^1(s)f(s)ds\frac{\partial(\frac{h^{ir}(t,t^*)}{f(t)})}{\partial t}dt
+\frac{1}{a} \int^b_0 \int^{t}_{0} q^2(s)f(s)ds\frac{\partial(\frac{h^{ir}(t,t^*)}{f(t)})}{\partial t}dt-u(t^*)\nonumber
\end{eqnarray}
By Lemma~\ref{obs1}, we have $\partial(\frac{h^{ir}(t,t^*)}{f(t)})/ \partial t\geq 0$. By Border's feasibility constraint, we have
$\int^{b}_{t} q^1(s)f(s)ds\leq \frac{1-F^n(t)}{n}$,
$\int^{t}_{0} q^2(s)f(s)ds\leq \frac{1-(1-F(t))^n}{n}$,
then

\begin{eqnarray}
\frac{REV}{n}&\leq& \frac{1-F^n(t)}{n}\frac{\partial(\frac{h^{ir}(t,t^*)}{f(t)})}{\partial t}dt+\frac{1}{a} \int^b_0 \frac{1-(1-F(t))^n}{n}\frac{\partial(\frac{h^{ir}(t,t^*)}{f(t)})}{\partial t}dt-u(t^*)\label{eq2}\\
&\leq& \frac{1-F^n(t)}{n}\frac{\partial(\frac{h^{ir}(t,t^*)}{f(t)})}{\partial t}dt+\frac{1}{a} \int^b_0 \frac{1-(1-F(t))^n}{n}\frac{\partial(\frac{h^{ir}(t,t^*)}{f(t)})}{\partial t}dt
\label{eq4}
\end{eqnarray}

Up to now, we have shown that, for any $t^*$, there exist allocation functions $q^1(t,t^*)$ and $q^2(t,t^*)$ such that (\ref{eq2}) becomes an equation (thus, by Theorem \ref{monoborder}, such allocation function guarantees BIC).

It remains to prove there exist $t^*$, together with $q^1(t,t^*)$, $q^2(t,t^*)$ and $u(t,t^*)$ found in the first step, such that $u(t^*,t^*)=0$, $t^*$ is the lowest point of $u(t,t^*)$ and (\ref{eq4}) becomes an equation, i.e. such $t^*$ together with $q^1(t,t^*)$, $q^2(t,t^*)$ and $u(t,t^*)$ ensure IR.

To sum up we show that there is indeed an IC, IR mechanism that achieves the allocation-independent revenue upper bound (\ref{eq4}) set by Border feasibility alone, in which the lowest utility point is at type $t^*$.

\section{Satisfying IR}

We first state our main result of this section.

\begin{theorem}
There exists a type $t^*$, allocation rule $q^1(t,t^*)$, $q^2(t,t^*)$ and utility function $u(t,t^*)$
such that inequality (\ref{eq4}) becomes an equality, also $u(t,t^*)\geq 0$ and $\frac{\partial u}{\partial t}(t,t^*)=q^1(t,t^*)-\frac{q^2(t,t^*)}{a}$.
\label{thm1}
\end{theorem}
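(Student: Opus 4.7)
The plan is a fixed-point argument over the parameter $t^* \in [0, b]$. For any candidate $t^*$, Section~4 already supplies Border-feasible allocations $q^1(\cdot, t^*), q^2(\cdot, t^*)$ that turn inequality (\ref{eq2}) into an equality. By Theorem~\ref{monoborder} applied to the ironed virtual value $h^{ir}(t, t^*)/f(t)$, which is weakly increasing in $t$ by Lemma~\ref{obs1}, these allocations can be taken weakly monotone in $t$---$q^1$ increasing and $q^2$ decreasing. Hence $u'(t, t^*) := q^1(t, t^*) - q^2(t, t^*)/a$ is weakly increasing in $t$, so $u(\cdot, t^*)$---pinned down up to a single additive constant by the derivative identity---is convex on $[0, b]$. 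Normalizing that constant by $u(t^*, t^*) = 0$ is exactly what cancels the $-u(t^*)$ term in (\ref{eq2}) and turns (\ref{eq4}) into an equality, so the entire remaining task is to enforce IR, i.e.\ $u(\cdot, t^*) \geq 0$. By convexity together with $u(t^*, t^*) = 0$, this is equivalent to $t^*$ itself being a minimizer of $u(\cdot, t^*)$.

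Let $M(t^*) := \arg\min_{t \in [0,b]} u(t, t^*)$, which by convexity is a nonempty interval $[\ell(t^*), r(t^*)]$; I seek $t^* \in [\ell(t^*), r(t^*)]$. The boundary behaviour pushes in the right direction: at $t^* = 0$, the item-1 virtual coefficient at $t = 0$ equals $-1/f(0) < 0$ while the item-2 coefficient is positive, so $u'(0, 0) < 0$ and hence $\ell(0) > 0$. Symmetrically at $t^* = b$, the item-1 coefficient at $t = b$ is large while the item-2 coefficient is negative, giving $u'(b, b) > 0$ and $r(b) < b$. Thus the ``defect'' $t^* - r(t^*)$ is negative at $t^* = 0$ and nonnegative at $t^* = b$, and an IVT-type argument then yields a fixed point provided I can justify enough semi-continuity. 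The key regularity is that $H(z, t^*) = \int_0^{F^{-1}(z)} h(s, t^*)\, ds$ is jointly continuous in $(z, t^*)$---the unit drop of $h(\cdot, t^*)$ at $s = t^*$ is washed out by integration---so its convex hull $H^{ir}(\cdot, t^*)$ is continuous in $t^*$ in the $C^0$ norm. From this, upper semi-continuity of $r(t^*)$ and lower semi-continuity of $\ell(t^*)$ follow by standard arguments on slopes of convex hulls, which suffices to close the IVT conclusion and produce a $t^*$ with $\ell(t^*) \leq t^* \leq r(t^*)$.

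A subtlety remains when $t^*$ lies strictly inside a flat ironing interval of $h^{ir}(\cdot, t^*)/f(\cdot)$: within such an interval, the virtual-surplus-maximizing allocation is non-unique, because Border feasibility constrains only the integrals of $q^1, q^2$ over the interval, not their pointwise values. I exploit this freedom, after the fixed point is located, to redistribute allocation mass inside the flat interval so that the pointwise identity $q^1(t^*, t^*) = q^2(t^*, t^*)/a$ holds at $t^*$ itself---ensuring $u'$ actually crosses zero there---without altering the revenue or breaking the $t$-monotonicity required for BIC. I expect the main obstacle to be the semi-continuity analysis of $\ell, r$ across parameter values where the ironing geometry of $H(\cdot, t^*)$ changes qualitatively (a flat region appears, disappears, or merges with another); gluing this continuity argument together with the intra-flat-interval redistribution step is where the bulk of the technical care will go.
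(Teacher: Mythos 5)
Your fixed-point strategy over $t^*$ matches the paper's plan in spirit, but you track a different object and have deferred exactly the hard parts. You work with the argmin set $M(t^*)=[\ell(t^*),r(t^*)]$ and invoke an IVT/open-cover argument via semicontinuity of $\ell,r$; the paper instead characterizes, for each $t^*$, the closed \emph{value interval} $[k_1(t^*),k_2(t^*)]$ of the achievable derivative $q^1(t^*,t^*)-q^2(t^*,t^*)/a$---via the two extreme allocations $\hat q,\check q$ and a mixing parameter $\alpha$---and proves (Lemma~\ref{biglemma}) that these intervals ``seamlessly connect'' as $t^*$ varies. That formulation makes the fixed point immediate: once zero lies in some $[k_1(t^{**}),k_2(t^{**})]$, choosing $\alpha$ directly produces $u'(t^{**},t^{**})=0$, so there is no separate ``redistribution'' step to justify. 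Your version postpones to the end precisely what the paper front-loads: that within the flat region one can realize $u'(t^*,t^*)=0$ while preserving monotonicity of $q^1,-q^2$, Border tightness, and optimality of the relaxed objective. Moreover, $M(t^*)$ is not even well-defined until you fix a selection from the non-unique optima on the flat region, and the selection you must use to make your semicontinuity claims true is entangled with the redistribution step you defer.

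The step you describe as ``standard arguments on slopes of convex hulls'' is where essentially all of the work lives. The paper spends Lemmas~\ref{lemma1}--\ref{lemma5} controlling, as $t_2\downarrow t^*$, the endpoints $l_1^{\min},l_1^{\max},l_2^{\min},l_2^{\max}$ of the flat region of $H^{ir}(\cdot,t_2)$ relative to those of $H^{ir}(\cdot,t^*)$; joint $C^0$-continuity of $H$ alone does not give this, because the flat region of the convex hull can shift, split, or merge discontinuously in $t^*$. Also note that by Lemma~\ref{abovelemma} the point $(F(t^*),H(F(t^*),t^*))$ sits strictly above the hull for every interior $t^*$, so $t^*$ \emph{always} lies strictly inside a nondegenerate flat region: the ``subtlety'' you flag at the end is the generic situation, not an edge case. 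Finally, your boundary-sign claims are not robust. At $t^*=0$, $q^1(0,0)$ and $q^2(0,0)$ are pinned down by Border tightness on the flat region, and the sign of $q^1(0,0)-q^2(0,0)/a$ depends on $a$, $n$, and $F$; you cannot assume $u'(0,0)<0<u'(b,b)$. The paper avoids this by handling separately the cases where zero lies below or above the whole union of value intervals, setting $t^{**}=0$ or $b$ with $\alpha=0$ or $1$; your argument needs the same case split.
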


The proof of Theorem \ref{thm1} is divided into several steps. First, we provide a structural characterization of the set of $q^1(t^*,t^*)$ and $q^2(t^*,t^*)$ from all possible solutions of the relaxed problem. It turns out that, for each $t^*$, there exists a closed interval (could be a single point) such that any number in this interval corresponds to such a $q^1(t^*,t^*)$,
Similarly for $q^2(t^*,t^*)$ and $q^1(t^*,t^*)-\frac{q^2(t^*,t^*)}{a}$.

Second, we show that, such value intervals of $q^1(t^*,t^*)$ for each $t^*$ seamlessly connect, as $t^*$ increases.
Then the value intervals of $q^1(t^*,t^*)-\frac{q^2(t^*,t^*)}{a}$ seamlessly connect, as $t^*$ increases.
We then choose $t^*$ based on properties of the union of these intervals.

\subsection{Finalizing the optimal allocation rule}
To determine the value interval of $q^1(t^*,t^*)$ for each $t^*$, we need to investigate a set of types around $t^*$ such that, for each type $t$ in this set,
$h^{ir}(t,t^*)/f(t)=h^{ir}(t^*,t^*)/f(t^*)$.
In other words, these types have the same ironed virtual value as at type $t^*$. However, these types may have different interim allocation probabilities, i.e., $q^1(t,t^*)$ for these types may be different. This is also different from the Myerson case where virtual value is the unique identity to determine the interim allocation rule.
In our case, the set of types with the same ironed virtual value can be partitioned into several subsets.
The partition points are exactly those where the convex hull $H^{ir}$ touches $H$.

It can be shown that, the largest (rightmost point of the interval) value of $q^1(t^*,t^*)$ is given by uniformly setting the highest possible interim probability for some subsets and
the lowest interim probability on the remaining subsets. The smallest (leftmost point of the interval) value of $q^1(t^*,t^*)$ is given similarly.

To make the above ideas formal, we now introduce several notations regarding how to partition the set of types with the same ironed virtual value.

\begin{lemma}
$H(F(t^*),t^*)> H^{ir}(F(t^*),t^*)$, when $t^*\in(0,b)$.
\label{abovelemma}
\end{lemma}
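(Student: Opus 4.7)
The plan is to exploit the jump discontinuity of $h(\cdot, t^*)$ at $t = t^*$: by definition $\lim_{t \uparrow t^*} h(t, t^*) = t^* f(t^*) + F(t^*)$ while $\lim_{t \downarrow t^*} h(t, t^*) = t^* f(t^*) + F(t^*) - 1$, a downward jump of exactly $1$. Since $H(z, t^*) = \int_0^{F^{-1}(z)} h(t, t^*)\,dt$ has derivative $\partial H/\partial z = h(F^{-1}(z), t^*)/f(F^{-1}(z))$, this jump translates into a downward jump of $1/f(t^*) > 0$ in the $z$-derivative at $z = F(t^*)$. A convex function cannot have a downward jump in its right-derivative, so $H(\cdot, t^*)$ is locally non-convex at $z = F(t^*)$, and one expects $H^{ir}$ to lie strictly below $H$ there.

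To make this rigorous, I will convert the derivative jump into a strict sub-chordal inequality. Fix a small $\delta > 0$ such that $z_1 := F(t^* - \delta)$ and $z_2 := F(t^* + \delta)$ both lie in $(0, 1)$ -- this is exactly where the hypothesis $t^* \in (0, b)$ is used. By the mean value theorem, the chord slope from $(z_1, H(z_1, t^*))$ to $(F(t^*), H(F(t^*), t^*))$ equals $\partial H/\partial z(\zeta_1, t^*)$ for some $\zeta_1 \in (z_1, F(t^*))$, and similarly the slope from $(F(t^*), H(F(t^*), t^*))$ to $(z_2, H(z_2, t^*))$ equals $\partial H/\partial z(\zeta_2, t^*)$ for some $\zeta_2 \in (F(t^*), z_2)$. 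Continuity of $h(\cdot, t^*)/f$ on each open side of $t^*$ ensures that for $\delta$ sufficiently small the first slope exceeds the second by at least $1/(2 f(t^*))$. Consequently the straight line through $(z_1, H(z_1, t^*))$ and $(z_2, H(z_2, t^*))$ passes strictly below the point $(F(t^*), H(F(t^*), t^*))$.

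The conclusion then follows from the definition of the convex hull. Writing $F(t^*) = \lambda z_1 + (1-\lambda) z_2$ for some $\lambda \in (0,1)$, convexity of $H^{ir}(\cdot, t^*)$ together with the bound $H^{ir} \leq H$ yields
$$H^{ir}(F(t^*), t^*) \leq \lambda H^{ir}(z_1, t^*) + (1-\lambda) H^{ir}(z_2, t^*) \leq \lambda H(z_1, t^*) + (1-\lambda) H(z_2, t^*) < H(F(t^*), t^*),$$
where the final strict inequality is the sub-chordal fact established in the previous paragraph. The only mildly delicate step is this middle one -- ensuring the two chord slopes differ by a definite positive amount for all sufficiently small $\delta$ -- which is handled by one-sided continuity of $h(\cdot, t^*)/f$ on each side of $t^*$ and the fact that the jump at $t^*$ is a fixed positive constant independent of $\delta$.
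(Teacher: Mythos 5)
Your proof is correct and takes essentially the same approach as the paper: both arguments hinge on the downward jump of size $1$ in $h(\cdot, t^*)$ at $t = t^*$, which becomes a strictly positive gap $1/f(t^*)$ between $\partial_- H/\partial z$ and $\partial_+ H/\partial z$ at $z = F(t^*)$, i.e.\ a concave kink forcing $H$ strictly above its convex hull there. The paper stops at the derivative inequality and asserts the conclusion; you fill in the chordal argument (via the mean value theorem and the convexity inequality $H^{ir}(F(t^*)) \leq \lambda H^{ir}(z_1) + (1-\lambda) H^{ir}(z_2) \leq \lambda H(z_1) + (1-\lambda) H(z_2) < H(F(t^*))$) that the paper leaves implicit, which is a welcome addition of rigor but not a different route.
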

It says point $(t^*, H(F(t^*),t^*))$ is above point $(t^*, H^{ir}(F(t^*),t^*))$ when $t^*\in(0,b)$.
Thus $l^{min}_2(t^*,t^*)>l^{max}_1(t^*,t^*)$. By this lemma, $(\hat{q}^1,\hat{q}^2)$ and $(\check{q}^1,\check{q}^2)$ are well defined.

\begin{defi} For any $t^*\in(0,b)$, we define
\begin{eqnarray}
l^{min}_1(t,t^*)&=&\min_{x\leq F(t)}(\frac{\partial H^{ir}}{\partial z}(x,t^*)=\frac{\partial H^{ir}}{\partial z}(F(t),t^*))\nonumber\\
l^{min}_2(t,t^*)&=&\min_{x\geq F(t)}(H^{ir}(x,t^*)=H^{ir}(F(t),t^*))\nonumber\\
l^{max}_1(t,t^*)&=&\max_{x\leq F(t)}(H^{ir}(x,t^*)=H^{ir}(F(t),t^*))\nonumber\\
l^{max}_2(t,t^*)&=&\max_{x\geq F(t)}(\frac{\partial H^{ir}}{\partial z}(x,t^*)=\frac{\partial H^{ir}}{\partial z}(F(t),t^*))\nonumber
\end{eqnarray}
\label{defi1}
\end{defi}
\begin{figure}
  \centering
  \includegraphics[width=4.5cm]{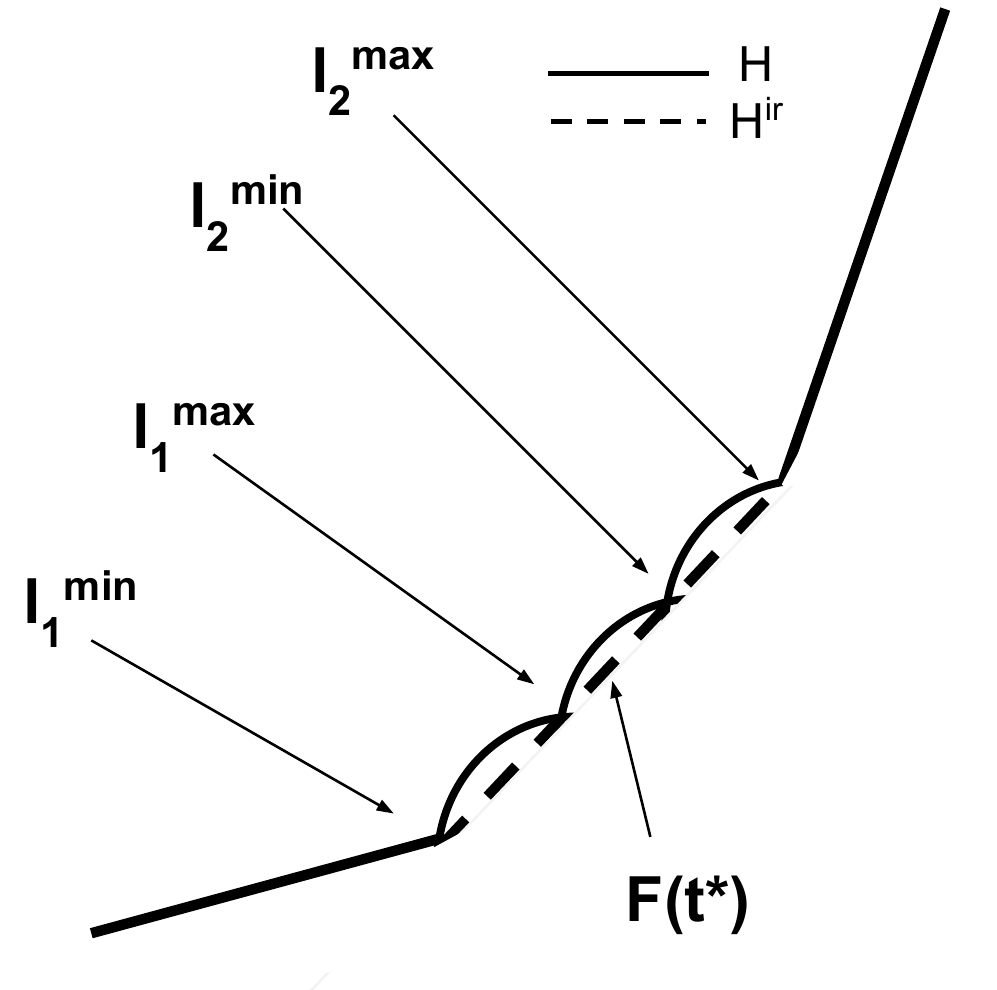}\quad
  \includegraphics[width=5.5cm]{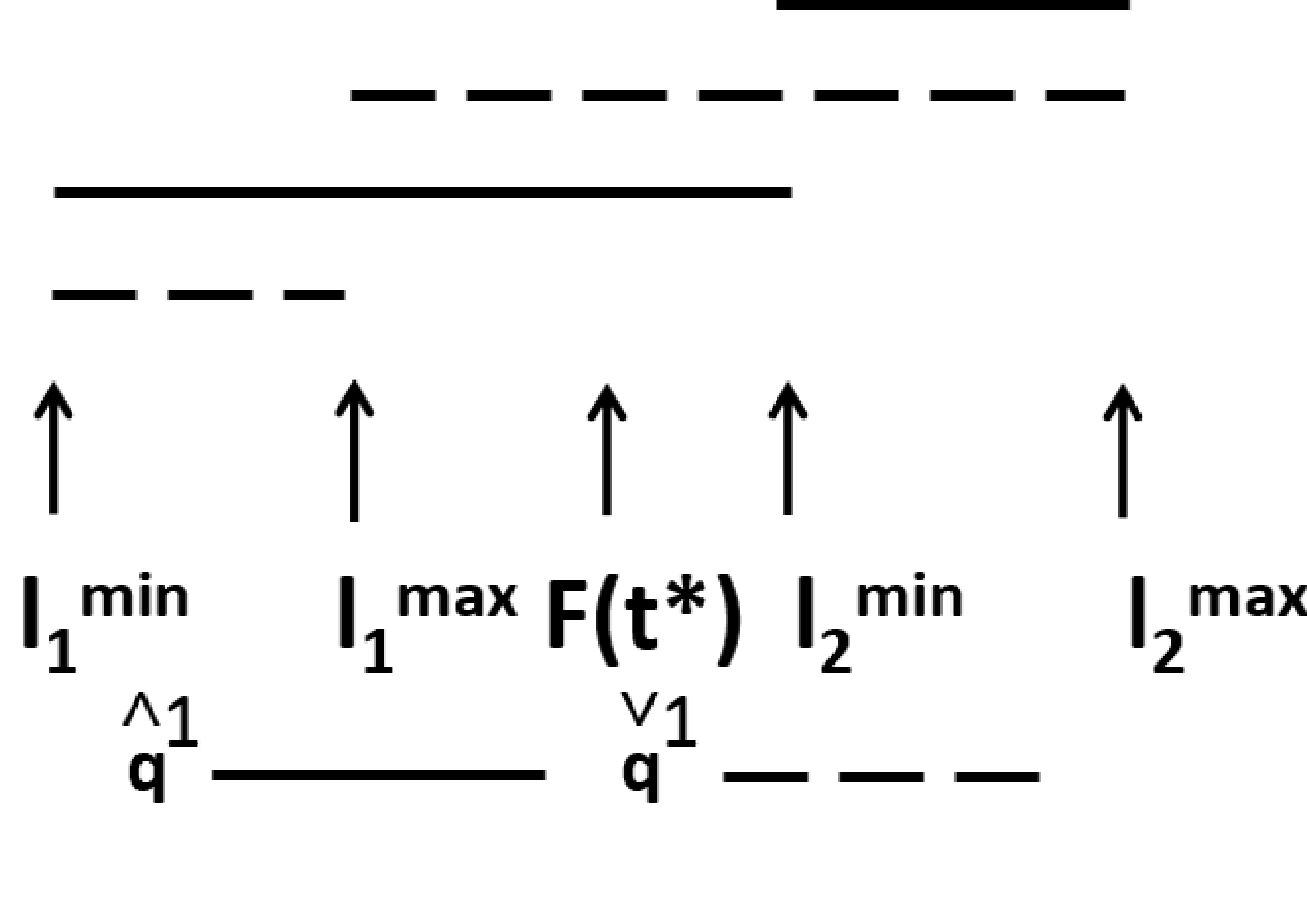}\quad
  \caption{Partition points and allocations}\label{fig1}
\end{figure}
Look at Fig \ref{fig1}, $l^{min}_1(t,t^*)$ is the leftmost point of the straight line segment
 on which point $(t,H^{ir}(t,t^*))$ lies. Similar for $l^{min}_2(t,t^*)$, $l^{max}_1(t,t^*)$, $l^{max}_2(t,t^*)$.
Next we define the interim allocation rule.

If $(\frac{h^{ir}(t,t^*)}{f(t)})'>0$, $H^{ir}(F(t),t^*)=H(F(t),t^*)$.
$q^1(t,t^*)=F^{n-1}(t)$, $q^2(t,t^*)=(1-F(t))^n$.

If $(\frac{h^{ir}(t,t^*)}{f(t)})'=0$, there are two cases:
First, if $\frac{h^{ir}(t,t^*)}{f(t)}=\frac{h^{ir}(t^*,t^*)}{f(t)}$, there are two extreme allocations.
One extreme allocation is to assign a higher probability on $[F^{-1}(l^{min}_2(t^*,t^*)),F^{-1}(l^{max}_2(t^*,t^*))]$ and a lower probability
on $[F^{-1}(l^{min}_1(t^*,t^*)),F^{-1}(l^{min}_2(t,t^*)))$.
\begin{displaymath}
\hat{q}^1(t,t^*) = \left\{ \begin{array}{ll}
\frac{1}{n}\frac{(l_2^{min}(t^*,t^*))^n-(l_1^{min}(t^*,t^*))^n}{l_2^{min}(t^*,t^*)-l_1^{min}(t^*,t^*)}, & F(t)\in [l^{min}_1(t^*,t^*),l^{min}_2(t^*,t^*))\\
\frac{1}{n}\frac{(l_2^{max}(t^*,t^*))^n-(l_2^{min}(t^*,t^*))^n}{l_2^{max}(t^*,t^*)-l_2^{min}(t^*,t^*)}, & F(t)\in [l^{min}_2(t^*,t^*),l^{max}_2(t^*,t^*)]
\end{array} \right.
\end{displaymath}
\begin{displaymath}
\hat{q}^2(t,t^*) = \left\{ \begin{array}{ll}
\frac{1}{n}\frac{(1-l_1^{min}(t^*,t^*))^n-(1-l_2^{min}(t^*,t^*))^n}{l_2^{min}(t^*,t^*)-l_1^{min}(t^*,t^*)}, & F(t)\in [l^{min}_1(t^*,t^*),l^{min}_2(t^*,t^*))\\
\frac{1}{n}\frac{(l_2^{min}(t^*,t^*))^n-(1-l_2^{max}(t^*,t^*))^n}{l_2^{max}(t^*,t^*)-l_2^{min}(t^*,t^*)}, & F(t)\in [l^{min}_2(t^*,t^*),l^{max}_2(t^*,t^*)]
\end{array} \right.
\end{displaymath}

The other extreme allocation is to assign a higher probability on $[l^{max}_1(t^*,t^*),l^{max}_2(t^*,t^*)]$ and a lower one on
$[l^{min}_1(t^*,t^*),l^{max}_1(t,t^*))$. That is,
\begin{displaymath}
\check{q}^1(t,t^*) = \left\{ \begin{array}{ll}
\frac{1}{n}\frac{(l_2^{min}(t^*,t^*))^n-(l_1^{min}(t^*,t^*))^n}{l_2^{min}(t^*,t^*)-l_1^{min}(t^*,t^*)}, & F(t)\in [l^{min}_1(t^*,t^*),l^{max}_1(t^*,t^*))\\
\frac{1}{n}\frac{(l_2^{max}(t^*,t^*))^n-(l_2^{min}(t^*,t^*))^n}{l_2^{max}(t^*,t^*)-l_2^{min}(t^*,t^*)}, & F(t)\in [l^{max}_1(t^*,t^*),l^{max}_2(t^*,t^*)]
\end{array} \right.
\end{displaymath}
\begin{displaymath}
\check{q}^2(t,t^*) = \left\{ \begin{array}{ll}
\frac{1}{n}\frac{(1-l_1^{min}(t^*,t^*))^n-(1-l_2^{min}(t^*,t^*))^n}{l_2^{min}(t^*,t^*)-l_1^{min}(t^*,t^*)}, & F(t)\in [l^{min}_1(t^*,t^*),l^{max}_1(t^*,t^*))\\
\frac{1}{n}\frac{(l_2^{min}(t^*,t^*))^n-(1-l_2^{max}(t^*,t^*))^n}{l_2^{max}(t^*,t^*)-l_2^{min}(t^*,t^*)}, & F(t)\in [l^{max}_1(t^*,t^*),l^{max}_2(t^*,t^*)]
\end{array} \right.
\end{displaymath}

Second, if $\frac{h^{ir}(t,t^*)}{f(t)}\neq \frac{h^{ir}(t^*,t^*)}{f(t)}$, we assign the same interim allocation probability to all the types with same ironed virtual value.
\begin{displaymath}
q^1(s,t^*) = \left\{ \begin{array}{ll}
\frac{1}{n}\frac{(l_2^{max}(t,t^*))^n-(l_1^{min}(t,t^*))^n}{l_2^{max}(t,t^*)-l_1^{min}(t,t^*)}, & F(s)\in [l^{min}_1(t,t^*),l^{max}_2(t,t^*))
\end{array} \right.
\end{displaymath}
\begin{displaymath}
q^2(s,t^*) = \left\{ \begin{array}{ll}
\frac{1}{n}\frac{(1-l_1^{min}(t,t^*))^n-(1-l_2^{max}(t,t^*))^n}{l_2^{max}(t,t^*)-l_1^{min}(t,t^*)}, & F(s)\in [l^{min}_1(t,t^*),l^{max}_2(t,t^*))
\end{array} \right.
\end{displaymath}
Let $\alpha$ denote a parameter in $[0,1]$.
Consider
$q^1(t,t^*)= \alpha\hat{q}^1(t,t^*)+(1-\alpha)\check{q}^1(t,t^*)$ and
$q^2(t,t^*)= \alpha\hat{q}^2(t,t^*)+(1-\alpha)\check{q}^2(t,t^*), t\in[l_1^{min}(t^*,t^*),l_2^{max}(t^*,t^*)]$.
When $H^{ir}(F(t),t^*)<H(F(t),t^*)$, we always have $\partial(q^1(t,t^*)-\frac{q^2(t,t^*)}{a})/\partial t=0$.
By Lemma \ref{lemmaa1}, $q^1(t,t^*)$ and $q^2(t,t^*)$ still maximize the relaxed problem.
$q^1(t,t^*)$ and $-q^2(t,t^*)$ are both monotone in $t$, which guarantees BIC.
When $\alpha$ goes from 0 to 1, $q^1(t^*,t^*)$ can achieve any value between the interval $[\check{q}^1(t^*,t^*),\hat{q}^1(t^*,t^*)]$.
Furthermore, $q^1(t^*,t^*)-\frac{q^2(t^*,t^*)}{a}$ can achieve any value in the corresponding interval.

We give example of $q^1$ when $t^*=0$. It is similar when for $t^*=b$ and $q^2$.
The difference between Definition~\ref{defi1} is that we assign a unique interim allocation probability
$q^1(t,0)=[l^{max}_2(0,0)]^{n-1}$ when $t\in[0,l^{max}_2(0,0)]$. $q^1(t,0)$ is same as in Definition~\ref{defi1}.
Then, as promised, we show the value intervals of $q^1(t^*,t^*)$ are seamlessly connect as $t^*$ increases.
Let $[k_1(t^*),k_2(t^*)]$ denote the value interval for $q^1(t^*,t^*)$.
We say intervals seamlessly connect,
if for any two types $t^{*1}<t^{*2}$, we have that $k_2(t^{*1})\leq k_1(t^{*2})$,
and $\bigcup_{t^*}[k_1(t^*),k_2(t^*)]$ is connected.

\begin{lemma}
The value intervals of $q^1(t^*,t^*)$ seamlessly connect as $t^*$ increases.
\label{biglemma}
\end{lemma}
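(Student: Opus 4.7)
The plan is to first identify the two endpoints $k_1(t^*)$ and $k_2(t^*)$ concretely. Since Lemma~\ref{abovelemma} gives $l^{\max}_1(t^*,t^*) < F(t^*) < l^{\min}_2(t^*,t^*)$, evaluating $\hat{q}^1(t,t^*)$ and $\check{q}^1(t,t^*)$ at $t=t^*$ selects the two adjacent chord slopes of the convex map $x \mapsto x^n$, namely the chord on $[l^{\min}_1,l^{\min}_2]$ and the chord on $[l^{\min}_2,l^{\max}_2]$. Strict convexity of $x^n$ orders them, so one equals $k_1(t^*)$ and the other $k_2(t^*)$. The lemma thus reduces to tracking how the three touching points $l^{\min}_1(t^*,t^*)$, $l^{\min}_2(t^*,t^*)$, $l^{\max}_2(t^*,t^*)$ and the two chord slopes evolve as $t^*$ varies.

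First I would establish monotonicity of the touching points in $t^*$ using the explicit update $H(z,t^*) = \int_0^{F^{-1}(z)}(sf(s)+F(s))\,ds - \max\bigl(0,\, F^{-1}(z) - t^*\bigr)$. Raising $t^*$ increases $H(z,\cdot)$ pointwise on $\{z > F(t^*)\}$ and leaves it unchanged on $\{z \le F(t^*)\}$. Because taking the convex hull is monotone (if $f \le g$ pointwise then the hull of $f$ is at most the hull of $g$ pointwise), $H^{ir}(\cdot,t^*)$ is likewise pointwise non-decreasing in $t^*$. From this I would argue that each of $l^{\min}_1,\,l^{\min}_2,\,l^{\max}_2$ shifts weakly to the right as $t^*$ increases: intuitively, lifting $H$ on the right side forces new tangencies of the hull with $H$ to form further to the right. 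Convexity of $x^n$ then immediately yields that both chord slopes, and hence both $k_1(t^*)$ and $k_2(t^*)$, are non-decreasing functions of $t^*$.

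The last step is to verify the no-overlap condition $k_2(t^{*1}) \le k_1(t^{*2})$ for every $t^{*1} < t^{*2}$, together with connectedness of $\bigcup_{t^*}[k_1(t^*),k_2(t^*)]$. I would partition the $t^*$-axis into maximal regimes on which the combinatorial type of the ironing structure of $H^{ir}(\cdot,t^*)$ is constant. Within each regime the three touching points depend continuously on $t^*$, so the intervals $[k_1(t^*),k_2(t^*)]$ sweep out a connected piece of the value axis. At a transition between consecutive regimes --- for example, when $l^{\max}_2$ coalesces with a neighboring touching point, or when the local ironing segment at $F(t^*)$ merges with an adjacent one --- the identification of touching points across the transition forces the old chord on $[l^{\min}_2,l^{\max}_2]$ to equal the new chord on $[l^{\min}_1,l^{\min}_2]$, giving $k_2(t^{*-}) = k_1(t^{*+})$ and gluing consecutive regimes without a gap. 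The hard part will be this transition case analysis: one has to verify that every combinatorial way in which $H^{ir}(\cdot,t^*)$ can restructure as $t^*$ varies preserves the chord-slope identification, and the specific translation-like form of the $t^*$-update of $H$ is what makes the identifications go through.
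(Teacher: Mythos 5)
Your plan correctly identifies the two endpoints $k_1(t^*),k_2(t^*)$ as chord slopes of $x\mapsto x^n/n$ over the two adjacent sub-intervals $[l^{\min}_1,l^{\min}_2]$ and $[l^{\min}_2,l^{\max}_2]$, and the observation that $H(\cdot,t_2)\ge H(\cdot,t^*)$ pointwise with equality on $\{z\le F(t^*)\}$ is both correct and genuinely useful (indeed, it implies the clean fact that every touching point $z_0\le F(t^*)$ of $H^{ir}(\cdot,t^*)$ with $H(\cdot,t^*)$ persists as a touching point for $t_2>t^*$, since $H^{ir}(z_0,t_2)$ is squeezed between $H^{ir}(z_0,t^*)=H(z_0,t^*)$ from below and $H(z_0,t_2)=H(z_0,t^*)$ from above). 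But your intermediate step is imprecise and your final step is unaddressed, and both are exactly where the work lies. The phrase \emph{``lifting $H$ on the right side forces new tangencies of the hull with $H$ to form further to the right''} is not quite the right picture: what actually needs to be shown is that the ironing segment of $H^{ir}(\cdot,t_2)$ that contains $F(t_2)$ cannot extend left past $l^{\max}_1(t^*,t^*)$. Persistence of the old touching point $l^{\max}_1(t^*,t^*)$ alone does not give this, because a touching point can lie in the interior of a linear segment of the hull. What the paper proves instead (its Lemma labelled \texttt{lemma2}) is a \emph{strict slope separation}: because $H(\cdot,t_2)$ equals $H(\cdot,t^*)$ on $[0,F(t^*)]$ but is strictly larger on $(F(t^*),1]$, one has $\tfrac{\partial H^{ir}}{\partial z}(F(t^*),t_2)>\tfrac{\partial H^{ir}}{\partial z}(F(t^*),t^*)$, and a chain of inequalities then shows the hull's slope on the $t_2$-segment at $F(t_2)$ strictly exceeds the left-slope at $l^{\max}_1(t^*,t^*)$. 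Pointwise monotonicity of $H^{ir}$ in $t^*$ does \emph{not} by itself imply slope monotonicity (a hull that rises uniformly keeps its slopes), so your observation needs to be refined to use the fact that $H$ is pinned on the left and only rises on the right.

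The larger gap is in your final paragraph. The ``no-overlap'' condition $k_2(t^{*1})\le k_1(t^{*2})$ and the ``no-gap'' condition $k_2(t^{*-})=k_1(t^{*+})$ are precisely what the paper establishes through five auxiliary lemmas (\texttt{lemma1}--\texttt{lemma5}): a first lemma controlling where $l^{\max}_1(t_2,t_2)$ can sit, then two pairs of lemmas pinning $l^{\min}_1(t_2,t_2)$ and $l^{\min}_2(t_2,t_2)$ from above by $l^{\max}_1(t^*,t^*)+\epsilon$ and $l^{\max}_2(t^*,t^*)+\epsilon$ and from below by $l^{\max}_1(t^*,t^*)$ and $l^{\max}_2(t^*,t^*)$. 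These convergence-from-above statements are what simultaneously deliver no-overlap (via monotonicity of chord slopes of $x^n$ in both endpoints) and no-gap (via the limit $\hat q^1(t_2,t_2)\to\check q^1(t^*,t^*)$). Your plan simply labels this the ``hard part'' and defers it, but there is no proof of the lemma without carrying out exactly that case analysis; the two-sided pinching lemmas are the substance, not a formality. Also note that monotonicity of $k_1$ and $k_2$ separately, which is all your second paragraph would establish, does not by itself imply the no-overlap condition $k_2(t^{*1})\le k_1(t^{*2})$ --- the intervals can be simultaneously nondecreasing and still overlap, so the stronger pinching is unavoidable.
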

Finally we can prove Theorem~\ref{thm1}
\begin{proof}
By Lemma~\ref{biglemma}, we know the value intervals of $q^1(t^*,t^*)-\frac{q^2(t^*,t^*)}{a}$ also seamlessly connect.
If zero is in the union of value intervals, we assume it is in the value interval of $q^1(t^{**},t^{**})-\frac{q^2(t^{**},t^{**})}{a}$.
Choose $\alpha$ such that $q^1(t^{**},t^{**})= \alpha\hat{q}^1(t^{**},t^{**})+(1-\alpha)\check{q}^1(t^{**},t^{**})$.
Let $q^1(t,t^{**})= \alpha\hat{q}^1(t,t^{**})+(1-\alpha)\check{q}^1(t,t^{**})$ for $t\in [l^{min}_1(t^{**},t^{**}),l^{max}_2(t^{**},t^{**}))$.
Similar for $q^2(t,t^{**})$.

Let $u(t^{**},t^{**})=0$ and $\frac{\partial u}{\partial t}(t,t^{**})=q^1(t,t^{**})-\frac{q^2(t,t^{**})}{a}$.
Since $\frac{\partial u}{\partial t}(t^{**},t^{**})=q^1(t^{**},t^{**})-\frac{q^2(t^{**},t^{**})}{a}=0$ and $\frac{\partial u}{\partial t}(t,t^{**})$ is monotone in $t$, we have
$u'(t,t^{**})\leq 0,~t\in[0,t^{**}]$ and $u'(t,t^{**})\geq 0,~t\in[t^{**},b]$. Then $u(t^{**},t^{**})$ is indeed the lowest utility point.
The constructed $q^1(\cdot,t^{**})$, $q^2(\cdot,t^{**})$ and $u(\cdot,t^{**})$ satisfies both IIR and BIC and achieves the revenue upper bound in Equation (\ref{eq4}) and is therefore revenue optimal.

If zero is below(above) the union of value intervals, we set $t^{**}=0(b)$ and $\alpha=0(1)$ instead.
The zero utility point will be $0(b)$.
\end{proof}

\subsection{Implementation: the form of the optimal auction}
Given type $t^*$, interim allocation rule $q^1(t,t^*)$ and $q^2(t,t^*)$, and expected utility function $u(t,t^*)$ that in Theorem \ref{thm1}.
We now describe the form of the optimal mechanism.


The optimal auction first computes and selects the bidders with the highest ironed virtual value and compare this virtual value to a constant $\frac{h^{ir}(t^*,t^*)}{f(t^*)}$.
    \begin{itemize}
    \item If the two numbers are not equal, then the first item is uniformly allocated to the highest ironed virtual value bidders; the second item is uniformly allocated to he lowest ironed virtual value bidders.
    \item If the two values are equal, we randomize between two mechanisms.
    Let $\alpha$ denote a parameter such that $q^1(t^*,t^*)= \alpha\hat{q}^1(t^*,t^*)+(1-\alpha)\check{q}^1(t^*,t^*)$.
        With probability $\alpha$ (a constant that depends on $t^*$), we run the following Mechanism 1:
        \begin{itemize}
        \item If there exist types in some interval $[l_2^{min}(t^*,t^*),l_2^{max}(t^*,t^*)]$, item one is uniformly allocated to the bidders in this interval.
        \item If there does not exist such a type, item one is uniformly allocated to the highest ironed virtual value bidders.
        \item The second item is allocated in the same way but with a different interval condition $[l_1^{min}(t^*,t^*),l_2^{min}(t^*,t^*)]$.
        \item In Mechanism 1, when the ironed virtual value is equal to $\frac{h^{ir}(t^*,t^*)}{f(t^*)}$, there are priorities for types in $[l_2^{min}(t^*,t^*),l_2^{max}(t^*,t^*)]$ for getting item 1, there are priorities for types in $[l_1^{min}(t^*,t^*),l_2^{min}(t^*,t^*)]$ for getting item 2.
        \end{itemize}
        With probability $(1-\alpha)$, we run Mechanism 2. Mechanism 2 is identical to Mechanism 1 except for different intervals.
    \end{itemize}

Roughly speaking, the auction needs to deal with the case where there are multiple highest ironed virtual value bidders and needs to allocate among them delicately to implement the optimal interim allocation rule.

Let $(\dot{q}^1(\textbf{t}),\dot{q}^2(\textbf{t}),\dot{pay}(\textbf{t}))$ denote the mechanism we construct.
In fact $E_{\textbf{t}_{-i}}\dot{q}^1(t_i,\textbf{t}_{-i})=q^1(t_1,t^*)]$. The payment rule is:
$$pay_i(t_i,\textbf{t}_{-i})=(1-\frac{u(t_i)}{t_iq^1(t_i,t^*)+\frac{b-t_i}{a}q^2(t_i,t^*)})(t_i\dot{q}^1(t_i,\textbf{t}_{-i})+\frac{b-t_i}{a}\dot{q}^2(t_i,\textbf{t}_{-i}))$$
This guarantees that $\dot{u}(t_i,\textbf{t}_{-i})$ is always nonnegative, and we have
\begin{eqnarray}
E_{\textbf{t}_{-i}}\dot{u}(t_i,\textbf{t}_{-i})&=&\frac{u(t_i)}{t_iq^1(t_i,t^*)+\frac{b-t_i}{a}q^2(t_i,t^*)}E_{\textbf{t}_{-i}}(t_i\dot{q}^1(t_i,\textbf{t}_{-i})+\frac{b-t_i}{a}\dot{q}^2(t_i,\textbf{t}_{-i}))=u(t_i,t^*)\nonumber
\end{eqnarray}
By Theorem~\ref{thm1}, $(\dot{q}^1(\textbf{t}),\dot{q}^2(\textbf{t}),\dot{pay}(\textbf{t}))$ is indeed an optimal IR and BIC mechanism.
By \cite{manelli2010bayesian}, there is a DIC and IIR mechanism that achieves the same optimal revenue.
However this mechanism in general does not satisfy DIC and {\em ex post} IR at the same time.
We will discuss this later in more details via an example.

\section{An example where BIC differs from DIC in revenue, under {\em ex post IR}}

\subsection{Two bidders with uniform distributions}

There are two i.i.d. bidders with uniform valuation $v^1+v^2=1$, i.e. $F_i(t_i)=F(t_i)=t_i,~t_i\in[0,1]$.
If there is one such bidder, the optimal mechanism is trivial, i.e., to sell the items in a bundle at price 1. In the two-bidder case, first of all,
consider optimal BIC and IIR mechanism.
WLOG, we can restrict attention to bidder-symmetric mechanism \cite{Maskin1984}.
Note that if a mechanism is DIC and {\em ex post} IR, then there is also a symmetric mechanism which is DIC and {\em ex post} IR.

By our approach, one can show that $t^*$ must equal $\frac{1}{2}$ in the optimal BIC and IIR mechanism.
In the optimal symmetric BIC and IIR mechanism, we first prove the allocation rule and the expected utility are fixed in certain subset of types.
From Equation (\ref{eq1}), we have
\begin{eqnarray}
\frac{REV}{2}&=&\int^{1}_{0}q^1(t)h(t,\frac{1}{2})+q^2(t)(1-h(t,\frac{1}{2}))dt-u(\frac{1}{2})\nonumber\\
&\leq& \int^{1}_{0}(q^1(t)-q^2(t))h(t,\frac{1}{2})+q^2(t)dt\label{egeq1}\\
&\leq&\int^{1}_{0}(q^1(t)-q^2(t))h^{ir}(t,\frac{1}{2})+q^2(t)dt\label{egeq2}\\
&=& \int^{1}_{0}q^1(t)h^{ir}(t,\frac{1}{2})+q^2(t)(1-h^{ir}(t,\frac{1}{2}))dt\nonumber
\end{eqnarray}
\begin{displaymath}
\textrm{where,}\quad h(t,\frac{1}{2}) = \left\{ \begin{array}{ll}
2t & t\in[0,1/2]\\
2t-1 & t\in(1/2,1]
\end{array} \right. ,\quad h^{ir}(t) = \left\{ \begin{array}{ll}
2t & t\in[0,1/4)\\
1/2 & t\in[1/4,3/4]\\
2t-1 & t\in(3/4,1]
\end{array} \right.
\end{displaymath}

\begin{displaymath}
\frac{REV}{2}=\int_{[0,1/4]\cup[3/4,1]}[\int^1_{t}q^1(s)ds+\int^{t}_{0}q^2(s)ds]dt
\end{displaymath}
By Border's Theorem
\begin{displaymath}
\int^1_{t}q^1(s)ds\leq \frac{1-t^2}{2} \textrm{~and~} \int^{t}_0q^2(s)ds \leq \frac{1-(1-t)^2}{2}
\end{displaymath}
We have
\begin{eqnarray}
&\frac{REV}{2}\leq \int_{[0,1/4]\cup[3/4,1]} \frac{1-t^2}{2} +\frac{1-(1-t)^2}{2}dt=\frac{29}{48}&\label{egeq3}
\end{eqnarray}
To attain the revenue upper bound  $\frac{29}{24}$,
(\ref{egeq1})(\ref{egeq2}) and (\ref{egeq3}) must be equalities.

\begin{itemize}
\item Equality (\ref{egeq1}) requires $u(\frac{1}{2})=0$.
\item Equality (\ref{egeq2}) requires $q^1(t)-q^2(t)=0,~t\in(\frac{1}{4},\frac{3}{4})$.
\item Equality (\ref{egeq3}) requires $\int^1_{t}q^1(s)ds=\frac{1-t^2}{2}$, $\int^t_{0}q^1(s)ds=\frac{1-(1-t)^2}{2}$ for
almost every $t\notin[\frac{1}{4},\frac{3}{4}]$.
This further implies when the lower type is in $[0,\frac{1}{4}]\cup[\frac{3}{4},1]$, the item is allocated to the bidder with a higher type.
In particular, $q^1(t_i,0)=1$ and $q^2(t_i,0)=0$ for $t\in(0,1]$.
Interim allocation rule in some intervals is also fixed,
\begin{displaymath}
 \left\{ \begin{array}{ll}
q^1(t,\frac{1}{2})=t & t\in[0,1/4]\cup [3/4,1]\\
q^2(t,\frac{1}{2})=1-t & t\in[0,1/4]\cup [3/4,1]
\end{array} \right.
\end{displaymath}
However, interim allocation rule in interval $[1/4,3/4]$ is not fully fixed.
The expected utility function is fixed
\begin{displaymath}
u(t,\frac{1}{2}) = \left\{ \begin{array}{ll}
\frac{3}{16}-t+t^2 & t\in[0,1/4]\cup [3/4,1]\\
0 & t\in(1/4,3/4)
\end{array} \right.
\end{displaymath}

\end{itemize}

We now show that there is a BIC and {\em ex post} IR mechanism that achieves the revenue bound.
In the following, we construct the {\em ex post} IR and BIC mechanism.
The ex post allocation rule is
\begin{displaymath}
q^1_1(t_1,t_2) = \left\{ \begin{array}{ll}
1/2 & t_1=t_2 \textrm{~or~} (t_1,t_2)\in(1/4,3/4)\times(1/4,3/4)\\
1& t_1>t_2 \textrm{~and~} (t_1,t_2)\notin(1/4,3/4)\times(1/4,3/4)\\
0 & o.w.
\end{array} \right.
\end{displaymath}
Let $q^2_1(t_1,t_2)=q^1_2(t_1,t_2)=1-q^2_1(t_1,t_2)$ and $q^2_2(t_1,t_2)=q^1_1(t_1,t_2)$.

To satisfy the ex post IR constraint, the payment rule is a little tricky.
We let it be the valuation multiplies a factor $\beta(t_1)\in[0,1]$, i.e.,
\begin{displaymath}
pay_1(t_1,t_2)=\beta(t_1)(q^1_1(t_1,t_2)t_1+q^2_1(t_1,t_2)(1-t_1))
\end{displaymath}
\begin{displaymath}
\textrm{where,~}\beta(t)=\frac{u(t)}{q^1(t)t+q^2(t)(1-t)}=\left\{ \begin{array}{ll}
\frac{3/16-t+t^2}{t^2+(1-t)^2} & t\in[0,1/4]\cup [3/4,1]\\
0 & t\in(1/4,3/4)
\end{array} \right.
\end{displaymath}
It is easy to verify that $u_1(t_1,t_2)\geq 0$ and $E_{t_2}u_1(t_1,t_2)=u(t_1)$.

To see this mechanism is not DIC, consider the case when $t_2=0$.
When $t_1\in (0,1]$, by definition we have $(q^1_1(t_1,0),q^2_1(t_1,0))=(1,0)$, and
\begin{displaymath}
pay(t_1,0) = (1-\beta(t_1))t_1 =\left\{ \begin{array}{ll}
\frac{3t_1/16-t^2_1+t^3_1}{t_1^2+(1-t_1)^2} & t_1\in[0,1/4]\cup [3/4,1]\\
0 & t_1\in(1/4,3/4)
\end{array} \right.
\end{displaymath}
For $t_1\in (0,1]$, the allocation rule $(q^1_1(t_1,0), q^2_1(t_1,0))$ is constant $(1,0)$, but $pay^1_1(t_1,0)$ is not.
In other words, in this interval, the agent wants to report a type between $(1/4,3/4)$, where the payment is zero.
\begin{theorem}
No DIC and {\em ex post} IR auction achieves the revenue bound $\frac{29}{24}$.
\label{noticir}
\end{theorem}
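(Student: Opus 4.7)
The plan is a proof by contradiction. Suppose some DIC and ex post IR auction attains revenue $\tfrac{29}{24}$. Since DIC implies BIC and ex post IR implies IIR, it is also an optimal BIC$+$IIR mechanism, so every inequality (\ref{egeq1})--(\ref{egeq3}) must hold with equality. This pins down the interim allocations $q^1(t)=t$ and $q^2(t)=1-t$ and the interim utility $u(t)=t^2-t+\tfrac{3}{16}$ on $[0,\tfrac14]\cup[\tfrac34,1]$, and in particular forces the interim payment at the top type to satisfy $pay(1)=q^1(1)\cdot 1-u(1)=1-\tfrac{3}{16}=\tfrac{13}{16}$.

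The next step, which is the main obstacle, is to promote these interim equalities into pointwise ex post statements on rectangles. Border-tightness at $S=[\tfrac34,1]$ for item $1$ together with tightness at $S=[0,\tfrac34]$ for item $2$ forces $(q^1_1(t_1,t_2),q^2_1(t_1,t_2))=(1,0)$ a.e.\ on $[\tfrac34,1]\times[\tfrac14,\tfrac34]$; analogous tightness at $S=[\tfrac14,1]$ and $S=[0,\tfrac14]$ yields $(1,0)$ a.e.\ on $\{(t_1,t_2):t_1>t_2,\ t_2\in[0,\tfrac14]\}$; and a short integration of the interim rule over $[\tfrac34,1]\times[\tfrac34,1]$, combined with the previous conclusions, forces item $1$ to the higher type and item $2$ to the lower type there. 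Consequently, for almost every $t_2\in[0,\tfrac14]\cup[\tfrac34,1]$ the allocation to bidder $1$ is identically $(1,0)$ as $t_1$ ranges over $(t_2,1]$, and for almost every $t_2\in[\tfrac14,\tfrac34]$ it is identically $(1,0)$ as $t_1$ ranges over $(\tfrac34,1]$.

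Now DIC and ex post IR deliver the contradiction. On any horizontal segment in $(t_1,t_2)$-space where bidder $1$'s allocation is constant in $t_1$, DIC forces $pay_1(\cdot,t_2)$ to be constant in $t_1$, and ex post IR applied at the left endpoint of the segment bounds that constant by the endpoint type (using that the allocation there is $(1,0)$, so the ex post valuation equals $t_1$). This yields $pay_1(1,t_2)\le t_2$ for $t_2\in[0,\tfrac14]\cup[\tfrac34,1]$ and $pay_1(1,t_2)\le\tfrac34$ for $t_2\in[\tfrac14,\tfrac34]$. Integrating,
$$pay(1)=\int_0^1 pay_1(1,t_2)\,dt_2\ \le\ \int_0^{1/4}t_2\,dt_2+\int_{1/4}^{3/4}\tfrac34\,dt_2+\int_{3/4}^1 t_2\,dt_2=\tfrac{1}{32}+\tfrac{12}{32}+\tfrac{7}{32}=\tfrac{5}{8},$$
which strictly undercuts the required value $\tfrac{13}{16}$, contradicting the assumption that the mechanism attains revenue $\tfrac{29}{24}$.
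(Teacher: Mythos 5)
Your proof is correct and reaches the same conclusion, but by a genuinely different route from the paper. The paper's argument is \emph{utility-based}: it exploits the fact that the interim utility at the critical type $t^*=\tfrac12$ vanishes, uses ex post IR to propagate this to $u_1(\tfrac12,t_2)=0$ for (a.e.) every $t_2$, and then looks at the single slice $t_2=0$, where the allocation $(1,0)$ plus DIC forces $u_1(t_1,0)=t_1-\tfrac12$, which is negative for $t_1<\tfrac12$ --- an immediate contradiction with ex post IR. Your argument is the \emph{payment-based} dual: you pin down the interim payment $pay(1)=\tfrac{13}{16}$ at the top type, then bound the ex post payments $pay_1(1,t_2)$ slice by slice (using constancy of payment on the segment where bidder 1's allocation is locked at $(1,0)$, together with ex post IR at the left endpoint of that segment), and average over $t_2$ to get $pay(1)\le\tfrac58<\tfrac{13}{16}$. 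Both proofs ultimately rely on the same two ingredients --- DIC forcing a flat payment on a flat-allocation segment, and ex post IR capping that flat payment by the valuation at the segment's left end --- but the paper detects the contradiction locally on one slice, while you detect it by integrating over all slices. One small caution in your write-up: to obtain the pointwise ex post statement $(q^1_1,q^2_1)=(1,0)$ for $t_1>t_2$ with $t_2\in[0,\tfrac14]\cup[\tfrac34,1]$, you need Border tightness on a \emph{family} of sets $S=[t_2+\varepsilon,1]$ (and $[0,t_2]$), not merely the single sets $[\tfrac14,1]$ and $[0,\tfrac14]$ as you phrase it; the conclusion you state is right, but the intermediate justification should invoke all the tight constraints for $t\notin[\tfrac14,\tfrac34]$, which the revenue-optimality equalities indeed provide.
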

\begin{proof}
Suppose by contradiction that there exists a such mechanism.
Since $u(\frac{1}{2})=0$, {\em ex post} IR implies that
$u_1(\frac{1}{2},t_2)=0$ for type $t_2\in[0,1]$.
In particular, $u_1(\frac{1}{2},0)=0$.
When $t_2=0$, DIC constraint implies that
$\partial u(t_1,0)/\partial t_1 =q^1(t_1,0)-q^2(t_1,0)=1,~t_1\in(0,1]$.
Thus $u(t_1,0)=t_1-\frac{1}{2},~t_1\in(0,1]$.
However, {\em ex post} IR also requires $u(t_1,0)\geq 0$, contradicting the DIC constraint.
\end{proof}

\section{Acknowledgement}
The authors would like to thank Weiran Shen, Yulong Zeng, and Song Zuo for helpful discussion. Part of this work has been done while Pingzhong Tang was a participant of the Economics and Computation program, Simon Institute, UC Berkeley. He benefits from discussions with Nima Haghpanah and Christos Tzamos on the single-bidder case.

\bibliographystyle{ACM-Reference-Format-Journals}
\bibliography{acmsmall-sample-bibfile}
\newpage
\appendix
\section{Appendix}

\subsection{Proof of Lemma~\ref{monoborder}}
\begin{proof}
For the first part, the proof is exact same  to Lemma~\ref{generalborder} except using Border's theorem in symmetric version instead.
For the second part, when the bound is achieved, the inequality (\ref{eq0}) becomes equality. In symmetric version, it is
$$\int_{x(t)\geq v}q(t)f(t)dt=\frac{1-(1-\int_{x(t)\geq v}f(t)dt)^n}{n}, \forall v$$
Let $\mathcal{W}=\{s: x(s)\geq \min_{t\in \mathcal{C}}\{x(t)\}\}$, $w=\int_{t\in \mathcal{W}}f(t)dt$, $c=\int_{t\in \mathcal{C}}f(t)dt$, and $d=\int_{t\in \mathcal{D}}f(t)dt$
\begin{eqnarray}
\textrm{Then}\quad&&\int_{\mathcal{W}}q(t)f(t)dt=\frac{1-(1-\int_{\mathcal{W}}f(t)dt)^n}{n}=\frac{1-(1-w)^n}{n}\label{eqa1}\\
&&\int_{\mathcal{W}-\mathcal{C}}q(t)f(t)dt\leq\frac{1-(1-\int_{\mathcal{W}-\mathcal{C}}f(t)dt)^n}{n}=\frac{1-(1-w+c)^n}{n}\label{eqa2}\\
&&\int_{\mathcal{W}+\mathcal{D}}q(t)f(t)dt\leq\frac{1-(1-\int_{\mathcal{W}+\mathcal{D}}f(t)dt)^n}{n}=\frac{1-(1-w-d)^n}{n}\label{eqa3}
\end{eqnarray}
By (\ref{eqa1})-(\ref{eqa2}) and (\ref{eqa3})-(\ref{eqa1}), we have
\begin{eqnarray}
(\ref{eqa1})-(\ref{eqa2})\quad \int_{\mathcal{C}}q(t)f(t)dt&\geq&\frac{(1-w+c)^n-(1-w)^n}{n}\nonumber\\
\frac{\int_{\mathcal{C}}q(t)f(t)dt}{c}&\geq&\frac{1}{n}\sum_{k=0}^{n-1}(1-w+c)^(1-w)^{n-1-k}\nonumber\\
(\ref{eqa3})-(\ref{eqa1})\quad \int_{\mathcal{D}}q(t)f(t)dt&\leq&\frac{(1-w)^n-(1-w-d)^n}{n}\nonumber\\
\frac{\int_{\mathcal{D}}q(t)f(t)dt}{d}&\leq&\frac{1}{n}\sum_{k=0}^{n-1}(1-w-d)^k(1-w)^{n-1-k}\nonumber
\end{eqnarray}
Hence
$\frac{\int_{\mathcal{C}}q(t)f(t)dt}{c}\geq
\frac{\int_{\mathcal{D}}q(t)f(t)dt}{d}$.
\end{proof}

\subsection{Proof of Lemma~\ref{derivative}}
\begin{proof}
Since $H^{ir}(z)$ is convex, so $\frac{\partial H^{ir}}{\partial z}(z)$ is weakly increasing.
We have
$$\frac{\partial H^{ir}}{\partial z}(z)=(\int^{F^{-1}(z)}_{0}h^{ir}(s)ds)'=h^{ir}(F^{-1}(z))(F^{-1}(z))'=\frac{h^{ir}(F^{-1}(z))}{f(F^{-1}(z))}$$
so $\frac{h^{ir}(t)}{f(t)}$ is weakly increasing as well.
\end{proof}

\subsection{Proof of Lemma~\ref{abovelemma}}
\begin{proof}
By definition,
\begin{eqnarray}
H(z,t^*)&=&\int^{F^{-1}(z)}_{0} h(s,t^*) ds\nonumber\\
\frac{\partial H}{\partial z}(z,t^*)&=&\frac{h(F^{-1}(z))}{f(F^{-1}(z))}\nonumber\\
\frac{\partial H}{\partial z}(F(t),t^*)&=&\frac{h(t)}{f(t)}\nonumber
\end{eqnarray}
Since
\begin{displaymath}
h(t,t^*) = \left\{ \begin{array}{ll}
tf(t)+F(t) & t\leq t^*\\
tf(t)+F(t)-1 & t> t^*
\end{array} \right.
\end{displaymath}
we have
\begin{displaymath}
\frac{\partial_- H}{\partial z}(F(t^*),t^*) = t^*+\frac{F(t^*)}{f(t^*)}>t^*+\frac{F(t^*)-1}{f(t^*)}=\frac{\partial_+ H}{\partial z}(F(t^*),t^*)
\end{displaymath}
So point $(F(t^*),H(F(t^*),t^*))$  is above the convex hull.
\end{proof}

\subsection{Proof of Lemma~\ref{biglemma}}
\begin{proof}
We prove the right direction for $q^1$, we consider the case $t^*\in(0,b)$  first.
A small lemma is required,
\begin{lemma}
$\exists \epsilon>0,~\forall t_2\in(t^*,t^*+\epsilon)$, we have $l_1^{max}(t_2,t_2)<F(t^*)$.
\label{lemma1}
\end{lemma}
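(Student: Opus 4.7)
The plan is to combine Lemma~\ref{abovelemma} applied at $t^*$ with a perturbation/continuity argument controlling how $H^{ir}(\cdot, t_2)$ deforms as $t_2$ moves upward from $t^*$.

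First, Lemma~\ref{abovelemma} gives $H(F(t^*), t^*) > H^{ir}(F(t^*), t^*)$; this strict separation means $F(t^*)$ lies in the interior of a linear piece of the convex hull $H^{ir}(\cdot, t^*)$, and, as the paragraph following that lemma notes, $l_1^{max}(t^*, t^*) < F(t^*) < l_2^{min}(t^*, t^*)$. Set $\delta := F(t^*) - l_1^{max}(t^*, t^*) > 0$. By continuity of $H(\cdot, t^*)$ and $H^{ir}(\cdot, t^*)$, the gap $H(z, t^*) - H^{ir}(z, t^*)$ is uniformly bounded below by some $\gamma > 0$ on any closed sub-interval of $(l_1^{max}(t^*, t^*), l_2^{min}(t^*, t^*))$.

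Second, I would quantify the perturbation $H(\cdot, t_2) - H(\cdot, t^*)$. Because $h(t, t_2)$ coincides with $h(t, t^*)$ everywhere except on $(t^*, t_2]$, where they differ by exactly $1$, direct integration gives, for $t_2 > t^*$,
$$
H(z, t_2) - H(z, t^*) = \begin{cases} 0, & z \le F(t^*), \\ F^{-1}(z) - t^*, & F(t^*) < z \le F(t_2), \\ t_2 - t^*, & z > F(t_2), \end{cases}
$$
so $\sup_z |H(z, t_2) - H(z, t^*)| \le t_2 - t^*$. Since the operation of taking a convex lower hull is $1$-Lipschitz in the sup norm over the underlying function, the same uniform bound carries over to $H^{ir}(\cdot, t_2) - H^{ir}(\cdot, t^*)$.

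Third, fix a closed sub-interval $I \subset (l_1^{max}(t^*, t^*), l_2^{min}(t^*, t^*))$ that contains $F(t^*)$ in its interior and whose left endpoint $\ell$ satisfies $l_1^{max}(t^*, t^*) < \ell < F(t^*)$. By step one, $H - H^{ir} \ge \gamma$ on $I$ at parameter $t^*$; by step two, for every $t_2$ with $t_2 - t^* < \gamma/2$ we still have $H(\cdot, t_2) - H^{ir}(\cdot, t_2) > 0$ pointwise on $I$. This forces $I$ to sit strictly above the convex hull $H^{ir}(\cdot, t_2)$, so $I$ is contained in a single linear piece of $H^{ir}(\cdot, t_2)$; in particular the nearest left partition point $l_1^{max}(t_2, t_2)$ is at most $\ell$, which is strictly less than $F(t^*)$. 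Shrinking $\epsilon$ so that additionally $F(t_2) \in I$ produces the desired $\epsilon$.

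The main obstacle is the structural step that ``$H - H^{ir} > 0$ pointwise on a connected $I$ implies $l_1^{max}(t_2, t_2) \le \ell$.'' This rests on the fact that a maximal linear piece of $H^{ir}(\cdot, t_2)$ is delimited precisely by the touching points of $H(\cdot, t_2)$ with $H^{ir}(\cdot, t_2)$, so strict separation on $I$ forbids any touching point inside $I$ and in particular to the right of $\ell$ within $I$. Once this is extracted from Definition~\ref{defi1}, the remaining ingredients — the closed-form perturbation bound and the $1$-Lipschitz continuity of the convex hull — are elementary.
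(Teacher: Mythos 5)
Your proof is correct, and it takes a genuinely different route from the paper's. The paper argues geometrically: it slides a parallelogram anchored at $(F(t^*),H^{ir}(F(t^*),t^*))$ along the path $z\mapsto H^{ir}(z,t^*)+F^{-1}(z)-t^*$, lets $l_0$ be the first abscissa where it touches $H(\cdot,t^*)$, claims $\epsilon=F^{-1}(l_0)-t^*$ works, and then derives a contradiction from $l_1^{\max}(t_2,t_2)\ge F(t^*)$ by comparing subgradients of $H^{ir}(\cdot,t_2)$ and $H^{ir}(\cdot,t^*)$ at $F(t^*)$. Your argument is analytic: you compute $H(\cdot,t_2)-H(\cdot,t^*)$ in closed form, bound its sup norm by $t_2-t^*$, transfer the bound to $H^{ir}$ via the $1$-Lipschitzness (or, even more simply, the monotonicity plus translation-equivariance) of the lower convex envelope, and then use the uniform positive gap $H(\cdot,t^*)-H^{ir}(\cdot,t^*)\ge\gamma$ on a compact sub-interval $I$ of $(l_1^{\max}(t^*,t^*),l_2^{\min}(t^*,t^*))$ that Lemma~\ref{abovelemma} provides. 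Since the perturbation is nonnegative, it suffices to note $H^{ir}(\cdot,t_2)\le H^{ir}(\cdot,t^*)+(t_2-t^*)$, giving $H(\cdot,t_2)-H^{ir}(\cdot,t_2)\ge\gamma-(t_2-t^*)>0$ on $I$ for $t_2-t^*<\gamma$; combined with $F(t_2)\in I$ this yields $l_1^{\max}(t_2,t_2)<\ell<F(t^*)$. The paper's approach has the advantage of producing an explicit admissible $\epsilon$; yours is shorter, more modular, and makes transparent exactly which features of the deformation $t^*\mapsto H(\cdot,t^*)$ are being used (nonnegative, bounded by $t_2-t^*$, and vanishing below $F(t^*)$).
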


Second we prove when $t_2$ is close to $t^*$, $l_1^{min}(t_2,t_2)$ is close to $l_1^{max}(t^*,t^*)$.
\begin{lemma}
$\exists \epsilon>0,~\forall t_2\in(t^*,t^*+\epsilon)$, we have $l_1^{min}(t_2,t_2)\geq l_1^{max}(t^*,t^*)$.
\label{lemma2}
\end{lemma}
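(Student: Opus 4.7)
The plan is to identify $y := l_1^{\max}(t^*, t^*)$ as a touching point of the new convex hull $H^{ir}(\cdot, t_2)$ with $H(\cdot, t_2)$, and then use a slope-comparison at $y$ to show that the linear piece of $H^{ir}(\cdot, t_2)$ containing $F(t_2)$ cannot extend strictly past it on the left.

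First I would make the perturbation explicit. From the piecewise formula for $h(s, t)$, whenever $t_2 > t^*$ the difference $h(s, t_2) - h(s, t^*)$ equals $1$ on $(t^*, t_2)$ and $0$ elsewhere, so integrating yields $H(z, t_2) = H(z, t^*)$ for every $z \le F(t^*)$ and $H(z, t_2) \ge H(z, t^*)$ everywhere. Since any affine minorant of $H(\cdot, t^*)$ is automatically an affine minorant of the pointwise larger $H(\cdot, t_2)$, taking the pointwise supremum of such minorants gives $H^{ir}(\cdot, t_2) \ge H^{ir}(\cdot, t^*)$ on $[0,1]$. Applied at $y \le F(t^*)$, the chain
$$H^{ir}(y, t_2) \;\le\; H(y, t_2) \;=\; H(y, t^*) \;=\; H^{ir}(y, t^*) \;\le\; H^{ir}(y, t_2)$$
collapses to equalities, so $y$ is a touching point of $H^{ir}(\cdot, t_2)$ with $H(\cdot, t_2)$.

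For the final step I would argue by contradiction. Suppose $l_1^{\min}(t_2, t_2) < y$. Then $H^{ir}(\cdot, t_2)$ is affine with a single slope $\beta$ throughout an open neighborhood of $y$. On the other hand, Lemma~\ref{abovelemma} places $F(t^*)$ strictly above $H^{ir}(\cdot, t^*)$, so $F(t^*)$ sits strictly inside the linear piece of $H^{ir}(\cdot, t^*)$ whose left endpoint is $y$; this makes $y$ a strict breakpoint of $H^{ir}(\cdot, t^*)$ whose one-sided slopes satisfy $\gamma_- < \gamma_+$. Taylor-expanding the inequality $H^{ir}(y \pm \delta, t_2) \ge H^{ir}(y \pm \delta, t^*)$ to first order about the equality at $y$ then forces $\beta \ge \gamma_+$ from the right and $\beta \le \gamma_-$ from the left, contradicting $\gamma_- < \gamma_+$. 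Hence $l_1^{\min}(t_2, t_2) \ge y = l_1^{\max}(t^*, t^*)$.

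The main obstacle I expect is verifying the strict-breakpoint property $\gamma_- < \gamma_+$ at $y$ — i.e.\ that $y$ is a genuine corner of $H^{ir}(\cdot, t^*)$ rather than a touching point hidden in the interior of a longer linear piece. This should follow from the definition of $l_1^{\max}$ as the rightmost touching point weakly below $F(t^*)$ combined with Lemma~\ref{abovelemma}, but has to be written carefully to exclude the degenerate coincidence in which $H$ happens to meet a linear piece of $H^{ir}(\cdot, t^*)$ at an interior point. Once the strict-breakpoint property is in hand the slope-comparison argument in fact works for every $t_2 > t^*$, so the $\epsilon$ in the statement is conservative and matches the local flavor of its companion Lemma~\ref{lemma1}.
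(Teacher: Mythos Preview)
Your Step 1 is correct and is essentially the observation the paper uses for the final equality in its own chain. The gap is in Step 2: the strict-breakpoint claim $\gamma_- < \gamma_+$ at $y = l_1^{\max}(t^*,t^*)$ need not hold. By definition $l_1^{\max}(t^*,t^*)$ is the rightmost \emph{touching} point weakly below $F(t^*)$, whereas the left endpoint of the linear piece of $H^{ir}(\cdot,t^*)$ through $F(t^*)$ is $l_1^{\min}(t^*,t^*)$. When $l_1^{\min}(t^*,t^*) < l_1^{\max}(t^*,t^*)$ --- a configuration the paper's framework explicitly accommodates by introducing all four $l$-quantities separately --- the point $y$ lies strictly in the interior of that linear piece, and then $\gamma_- = \gamma_+$. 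Your proposed resolution via Lemma~\ref{abovelemma} cannot rule this out: that lemma only gives $H(F(t^*),t^*) > H^{ir}(F(t^*),t^*)$, which forces $l_1^{\max}(t^*,t^*) < F(t^*)$ but says nothing about whether $H^{ir}(\cdot,t^*)$ has a kink there.

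The paper sidesteps the issue by locating the strict inequality elsewhere. From $H(z,t_2) = H(z,t^*)$ on $[0,F(t^*)]$ and $H(z,t_2) > H(z,t^*)$ on $(F(t^*),1]$ it deduces the ``gadget'' $\frac{\partial H^{ir}}{\partial z}(F(t^*),t_2) > \frac{\partial H^{ir}}{\partial z}(F(t^*),t^*)$, and then chains this with monotonicity of the derivative and (essentially) your Step 1 to obtain $\partial_+ H^{ir}(l_1^{\min}(t_2,t_2),t_2) > \partial_- H^{ir}(y,t_2)$, which by convexity forces $l_1^{\min}(t_2,t_2) \ge y$. Your contradiction argument can also be repaired without the breakpoint claim: in the degenerate case your Taylor comparison only yields $\beta = \gamma$, so the affine segment of $H^{ir}(\cdot,t_2)$ through $F(t_2)$ lies on the old supporting line; its right endpoint $r > F(t_2) > F(t^*)$ would then satisfy $H(r,t_2) = H^{ir}(r,t^*) \le H(r,t^*)$, contradicting the strict inequality $H(r,t_2) > H(r,t^*)$ that your own perturbation computation gives for $r > F(t^*)$.
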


\begin{lemma}
$\exists \epsilon>0,~\forall t_2\in(t^*,t^*+\epsilon)$, we have $l_1^{min}(t_2,t_2)\leq l^{max}_1(t^*,t^*)+\epsilon$.
\label{lemma3}
\end{lemma}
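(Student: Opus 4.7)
The plan is to use continuity of the convex-hull operation under small perturbations of $t^*$, combined with the observation that the left endpoint of the linear piece of $H^{ir}(\cdot,t_2)$ containing $F(t_2)$ is itself a touching point of $H^{ir}(\cdot,t_2)$ with $H(\cdot,t_2)$.

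First I would identify $l_1^{\min}(t_2,t_2)$ geometrically. By Lemma~\ref{abovelemma}, $H(F(t_2),t_2) > H^{ir}(F(t_2),t_2)$ for $t_2\in(0,b)$, so $F(t_2)$ lies strictly in the interior of a maximal linear piece of the convex function $H^{ir}(\cdot,t_2)$. Since the slope of $H^{ir}(\cdot,t_2)$ is non-decreasing and strictly changes at each endpoint of a maximal linear piece, the leftmost $x\leq F(t_2)$ with the same slope as at $F(t_2)$ is exactly the left endpoint $\ell(t_2)$ of this piece; hence $\ell(t_2)=l_1^{\min}(t_2,t_2)$. As an endpoint of a linear piece of the convex hull, $\ell(t_2)$ is a touching point: $H^{ir}(\ell(t_2),t_2)=H(\ell(t_2),t_2)$.

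Next I would establish uniform convergence of $H$ and $H^{ir}$ in their second argument. A direct computation from the piecewise definition of $h(s,t^*)$ gives
\begin{displaymath}
H(z,t_2)-H(z,t^*)=
\begin{cases}
0 & \text{if } z\leq F(t^*), \\
F^{-1}(z)-t^* & \text{if } F(t^*)<z\leq F(t_2), \\
t_2-t^* & \text{if } z>F(t_2),
\end{cases}
\end{displaymath}
so $\|H(\cdot,t_2)-H(\cdot,t^*)\|_\infty \leq t_2-t^*$. Because the convex-hull operation is $1$-Lipschitz in sup-norm (from $f\leq g+c$ one gets $\mathrm{conv}(f)\leq \mathrm{conv}(g)+c$ by $c$-shifting the minorant), this forces $\|H^{ir}(\cdot,t_2)-H^{ir}(\cdot,t^*)\|_\infty \leq t_2-t^*$ as well.

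Finally I would argue by contradiction. Suppose there exist $\epsilon_0>0$ and a sequence $t_2^{(k)}\downarrow t^*$ with $\ell(t_2^{(k)})>l_1^{\max}(t^*,t^*)+\epsilon_0$. Because $\ell(t_2^{(k)})\leq F(t_2^{(k)})\to F(t^*)$, a subsequence converges to some $\ell_\infty \in [l_1^{\max}(t^*,t^*)+\epsilon_0,\,F(t^*)]$. Passing to the limit in $H^{ir}(\ell(t_2^{(k)}),t_2^{(k)})=H(\ell(t_2^{(k)}),t_2^{(k)})$, using the uniform convergence above together with continuity of $H(\cdot,t^*)$ (an integral of a bounded function) and of $H^{ir}(\cdot,t^*)$ (convex), I obtain $H^{ir}(\ell_\infty,t^*)=H(\ell_\infty,t^*)$. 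Since $\ell_\infty\leq F(t^*)$, the definition of $l_1^{\max}(t^*,t^*)$ as the rightmost point $x\leq F(t^*)$ at which $H^{ir}(\cdot,t^*)$ touches $H(\cdot,t^*)$ forces $\ell_\infty\leq l_1^{\max}(t^*,t^*)$, contradicting $\ell_\infty\geq l_1^{\max}(t^*,t^*)+\epsilon_0$. The main obstacle is uniform convergence of the convex hull under uniform perturbations of the original function, which is fortunately handled cleanly by the elementary $1$-Lipschitz argument above.
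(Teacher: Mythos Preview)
Your proof is correct and proceeds by a genuinely different route than the paper. The paper's argument is constructive and geometric: it introduces an auxiliary function $M(z,t^*)$ (the lower convex envelope of $H(\cdot,t^*)$ restricted to $[0,F(t^*)]$), uses Lemma~\ref{lemma1} to ensure $l_1^{\min}(t_2,t_2)<F(t^*)$ so that this touching point must also lie on the curve $M$, and then compares slopes of chords between five explicitly constructed points $A,B,C,D,E$ to manufacture an explicit threshold $W_2(\epsilon,t^*)>t^*$ such that $t_2<W_2(\epsilon,t^*)$ forces the desired bound. Your argument replaces all of this by two soft facts: the lower-convex-envelope operator is $1$-Lipschitz in sup-norm, and $l_1^{\min}(t_2,t_2)$ is always a touching point of $H^{ir}(\cdot,t_2)$ with $H(\cdot,t_2)$. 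A subsequential compactness argument then shows any limit point of $l_1^{\min}(t_2,t_2)$ as $t_2\downarrow t^*$ is a touching point of $H^{ir}(\cdot,t^*)$ with $H(\cdot,t^*)$ in $[0,F(t^*)]$, hence $\leq l_1^{\max}(t^*,t^*)$ (with $\ell_\infty=F(t^*)$ ruled out by Lemma~\ref{abovelemma}). This is cleaner and avoids both $M$ and the appeal to Lemma~\ref{lemma1}; the trade-off is that it is non-constructive and yields no explicit modulus, whereas the paper produces a concrete $W_2(\epsilon,t^*)$. Note that both arguments actually establish the $\limsup$ statement (for every $\epsilon_0>0$ there exists $\delta>0$ with $l_1^{\min}(t_2,t_2)\leq l_1^{\max}(t^*,t^*)+\epsilon_0$ for $t_2\in(t^*,t^*+\delta)$), which is precisely what is consumed in the proof of Lemma~\ref{biglemma}.
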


Third we prove when $t_2$ is close to $t^*$, $l_2^{min}(t_2,t_2)$ is close to $l_2^{max}(t^*,t^*)$.
\begin{lemma}
$\exists \epsilon>0,~\forall t_2\in(t^*,t^*+\epsilon)$, we have $l_2^{min}(t_2,t_2)\geq l^{max}_2(t^*,t^*)$
\label{lemma4}
\end{lemma}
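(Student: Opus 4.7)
The plan is to analyze how the convex hull of $H(\cdot,t)$ evolves as $t$ crosses $t^{*}$ from above. The key arithmetic observation is that $h(s,t_2) - h(s,t^{*})$ equals $1$ on $(t^{*},t_2]$ and $0$ elsewhere, so by integration $H(z,t_2)-H(z,t^{*})$ vanishes on $[0,F(t^{*})]$, ramps up from $0$ to $t_2-t^{*}$ on $[F(t^{*}),F(t_2)]$, and equals the constant $t_2-t^{*}$ on $[F(t_2),1]$. In particular the sup-norm of this perturbation is $t_2-t^{*}$, which tends to $0$ as $t_2 \downarrow t^{*}$.

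First I would show that for small enough $\epsilon$, the point $l_2^{max}(t^{*},t^{*})$ remains a touching point of $H^{ir}(\cdot,t_2)$ with $H(\cdot,t_2)$. By Lemma~\ref{abovelemma}, $F(t^{*})$ lies strictly inside the linear piece of $H^{ir}(\cdot,t^{*})$ whose right endpoint is $l_2^{max}(t^{*},t^{*})$, so $F(t^{*}) < l_2^{max}(t^{*},t^{*})$, and by continuity of $F$ we have $F(t_2) < l_2^{max}(t^{*},t^{*})$ for small $\epsilon$. At $l_2^{max}(t^{*},t^{*})$, $H$ is shifted up by exactly $t_2-t^{*}$. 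I would verify that the supporting tangent of the original convex hull at $l_2^{max}(t^{*},t^{*})$, lifted by the constant $t_2-t^{*}$, continues to support $H(\cdot,t_2)$ from below globally: on $[F(t_2),1]$ this is immediate because $H$ is lifted by the same amount there, while on $[0,F(t^{*})]$ the original strict gap between $H(\cdot,t^{*})$ and the tangent is bounded below on any compact set, which absorbs the $t_2-t^{*}$ perturbation for $\epsilon$ small.

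Second I would show that no new touching point of $H^{ir}(\cdot,t_2)$ with $H(\cdot,t_2)$ appears in the open interval $(F(t_2), l_2^{max}(t^{*},t^{*}))$. On this interval $H(z,t_2) = H(z,t^{*}) + (t_2-t^{*})$, whereas the original gap $H(z,t^{*}) - H^{ir}(z,t^{*})$ is bounded below by some $\delta>0$ on every compact subinterval of $(l_1^{min}(t^{*},t^{*}), l_2^{max}(t^{*},t^{*}))$. For $t_2 - t^{*} < \delta$, the candidate ironed value $H^{ir}(z,t^{*}) + (t_2-t^{*})$ is strictly below $H(z,t_2)$ on such subintervals, precluding any new touching point. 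Combined with the first step and with Lemmas~\ref{lemma2} and~\ref{lemma3}, which locate $l_1^{min}(t_2,t_2)$ just to the right of $l_1^{max}(t^{*},t^{*})$, the linear piece of $H^{ir}(\cdot,t_2)$ containing $F(t_2)$ extends at least up to $l_2^{max}(t^{*},t^{*})$, which immediately yields $l_2^{min}(t_2,t_2) \geq l_2^{max}(t^{*},t^{*})$.

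The main obstacle will be the delicate local analysis near $l_2^{max}(t^{*},t^{*})$ approached from the left, where the uniform $\delta$-gap degenerates. I expect to handle this by a compactness-and-contradiction argument: if the conclusion fails, extract a sequence $t_2^{(k)} \downarrow t^{*}$ together with ``new'' touching points $y_k \in (F(t_2^{(k)}), l_2^{max}(t^{*},t^{*}))$, pass to a subsequential limit $y^{*}$, and show that $y^{*}$ must itself be a touching point of $H^{ir}(\cdot,t^{*})$ with $H(\cdot,t^{*})$ in the interior of the linear piece $(l_1^{min}(t^{*},t^{*}), l_2^{max}(t^{*},t^{*}))$, contradicting the maximality that defines $l_2^{max}(t^{*},t^{*})$. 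Alternatively, a quantitative argument exploiting the second-order growth of the gap near the endpoint (so that the gap exceeds the linear-in-$(t_2-t^{*})$ perturbation on a shrinking neighborhood of scale $O(\sqrt{t_2-t^{*}})$) would also close the argument.
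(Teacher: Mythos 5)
Your high-level strategy rests on a claim that is false in the generality the lemma requires, and this is not a minor obstacle to be handled at the end --- it invalidates both of your two main steps.

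The crucial structural fact you are missing is that the linear piece of $H^{ir}(\cdot,t^*)$ that contains $F(t^*)$ will in general have \emph{interior} touching points with $H(\cdot,t^*)$, namely $l_1^{max}(t^*,t^*)$ and $l_2^{min}(t^*,t^*)$ (that is exactly why the paper distinguishes four quantities $l_1^{min}<l_1^{max}<l_2^{min}<l_2^{max}$ rather than two). At such a point the gap $H(z,t^*)-H^{ir}(z,t^*)$ is exactly zero. Consequently: (i) in your second step, the claim that the gap is ``bounded below by some $\delta>0$ on every compact subinterval of $(l_1^{min},l_2^{max})$'' is false --- the gap vanishes at $l_2^{min}(t^*,t^*)$, which does lie in $(F(t_2),l_2^{max}(t^*,t^*))$ once $t_2$ is chosen with $F(t_2)<l_2^{min}(t^*,t^*)$, so the gap argument cannot rule out a surviving touching point there; (ii) in your first step, the lifted supporting line fails to stay below $H(\cdot,t_2)$ near $l_1^{min}(t^*,t^*)\in[0,F(t^*)]$ for the same reason, unless you switch to the \emph{right} tangent at $l_2^{max}$ and separately establish a strict kink there; and (iii) your compactness fallback draws a contradiction that is not one: a limiting touching point $y^*$ of $H^{ir}(\cdot,t^*)$ strictly inside $(l_1^{min},l_2^{max})$ does \emph{not} contradict the maximality of $l_2^{max}$ --- the point $l_2^{min}(t^*,t^*)$ itself is such an interior touching point. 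More broadly, the underlying intuition that a small (sup-norm) perturbation of $H$ moves the convex hull and its contact set only slightly is exactly what is not true here; the lemma is precisely a statement about a discontinuous jump of the contact set, and no uniform-gap or continuity argument will see it.

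The paper's proof sidesteps the gap issue entirely: assuming $l_2^{min}(t_2,t_2)<l_2^{max}(t^*,t^*)$, it places six points on the graphs of $H(\cdot,t^*)$ and $H(\cdot,t_2)$ so that $CDFE$ is a parallelogram (because $H(\cdot,t_2)-H(\cdot,t^*)$ is constant to the right of $F(t_2)$), reads off a chain of chord-slope inequalities
\[
\frac{\partial H^{ir}}{\partial z}(F(t^*),t^*)=(AF)'\geq(EF)'=(CD)'\geq(BC)'=\frac{\partial H^{ir}}{\partial z}(F(t^*),t_2),
\]
and then contradicts the strict inequality $\frac{\partial H^{ir}}{\partial z}(F(t^*),t_2)>\frac{\partial H^{ir}}{\partial z}(F(t^*),t^*)$ already established in the proof of Lemma~\ref{lemma2}. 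No uniform lower bound on $H-H^{ir}$ is ever needed, which is what makes the argument robust to interior touching points. If you want to salvage your approach, you would need to replace the $\delta$-gap bound by an argument that directly shows the old interior touching points cease to be touching points of $H^{ir}(\cdot,t_2)$; but that is essentially the chord-slope comparison the paper already carries out.
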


\begin{lemma}
$\exists \epsilon>0,~\forall t_2\in(t^*,t^*+\epsilon)$, we have $l_2^{min}(t_2,t_2)\leq l_2^{max}(t^*,t^*)+\epsilon$.
\label{lemma5}
\end{lemma}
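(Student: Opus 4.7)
My plan is a continuity-of-the-convex-hull argument combined with a local analysis of $H^{ir}(\cdot,t^*)$ near $L:=l^{max}_2(t^*,t^*)$. I will first prove the uniform estimate
\[
0 \le H^{ir}(z,t_2)-H^{ir}(z,t^*)\le t_2-t^*\qquad\forall z\in[0,1].
\]
Since $h(s,t_2)-h(s,t^*)=\mathbf{1}_{(t^*,t_2]}(s)$, integration yields $0\le H(z,t_2)-H(z,t^*)\le t_2-t^*$ pointwise. Because $H^{ir}(\cdot,t^*)$ is convex and $\le H(\cdot,t^*)\le H(\cdot,t_2)$, it is a convex minorant of $H(\cdot,t_2)$, hence $\le H^{ir}(\cdot,t_2)$; symmetrically $H^{ir}(\cdot,t_2)-(t_2-t^*)$ is a convex minorant of $H(\cdot,t^*)$, hence $\le H^{ir}(\cdot,t^*)$.

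Next I will exploit Lemma~\ref{abovelemma}: $F(t^*)$ is strictly inside the affine piece $[l^{min}_1(t^*,t^*),L]$ of $H^{ir}(\cdot,t^*)$ of slope $m^*$. Since $L$ is the right endpoint of that piece, $H^{ir}(\cdot,t^*)$ strictly departs from the tangent line of slope $m^*$ just past $L$; that is, there exist $\rho>0$ and a nondecreasing $\Phi(\eta)>0$ with $H^{ir}(L+\eta,t^*)-H^{ir}(L,t^*)\ge m^*\eta+\Phi(\eta)$ for $\eta\in(0,\rho]$. In the slope-jump case $\Phi(\eta)\ge\Delta\eta$ with $\Delta>0$; in the touching-region case (where $H^{ir}(\cdot,t^*)=H(\cdot,t^*)$ just past $L$), $\Phi$ captures the second-order growth of $H(\cdot,t^*)-m^*z$ away from its minimum at $L$.

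Now suppose for contradiction that $l^{min}_2(t_2,t_2)>L+\epsilon$. Then the affine piece of $H^{ir}(\cdot,t_2)$ containing $F(t_2)$ has a single slope $m_2$ on an interval covering $[l^{min}_1(t^*,t^*),L+\epsilon]$ — the left endpoint persists as a touching point for $t_2$ because $l^{min}_1(t^*,t^*)<F(t^*)$ lies where $H(\cdot,t_2)=H(\cdot,t^*)$, and the continuity estimate forces $H^{ir}(l^{min}_1,t_2)=H^{ir}(l^{min}_1,t^*)=H(l^{min}_1,t_2)$. Applying the continuity estimate at $z=L$ and $z=L+\epsilon$ yields the lower bound $m_2\ge m^*+\Phi(\epsilon)/\epsilon-(t_2-t^*)/\epsilon$; applying it at $z=l^{min}_1(t^*,t^*)$ and $z=L$ yields the matching upper bound $m_2\le m^*+(t_2-t^*)/(L-l^{min}_1(t^*,t^*))$. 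Choosing $\epsilon>0$ so small that $\Phi(\epsilon)/\epsilon$ exceeds the later correction and then restricting to $t_2-t^*<\epsilon\cdot\Phi(\epsilon)/(2\epsilon)$, the two bounds become incompatible, giving the contradiction and hence $l^{min}_2(t_2,t_2)\le L+\epsilon$.

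The hardest part will be Step 2 in the touching-region case, where $H^{ir}(\cdot,t^*)$ has no first-order slope jump at $L$ and $\Phi$ must be extracted from the second-order convexity deficit of $H(\cdot,t^*)$ at $L$; this will require a uniform quantitative lower bound on how $H(\cdot,t^*)-H^{ir}(\cdot,t^*)$ grows to the right of $L$, and a correspondingly refined version of the slope upper bound in the contradiction step (since a smaller $\Phi(\epsilon)/\epsilon$ demands a correspondingly smaller $t_2-t^*$).
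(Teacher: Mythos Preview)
Your overall strategy—establishing the uniform sandwich $0\le H^{ir}(z,t_2)-H^{ir}(z,t^*)\le t_2-t^*$ and then squeezing the slope $m_2$ of the affine piece of $H^{ir}(\cdot,t_2)$ through $F(t_2)$ between incompatible bounds—is sound and is a clean analytic alternative to the paper's geometric argument with the five points $A,\ldots,E$ and the parallelogram $BCED$. The lower bound $m_2\ge m^*+\Phi(\epsilon)/\epsilon-(t_2-t^*)/\epsilon$ is correctly derived, and your Step~1 and the definition of $\Phi$ are fine.

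The gap is in the \emph{upper} bound on $m_2$. You claim that the affine piece of $H^{ir}(\cdot,t_2)$ containing $F(t_2)$ covers $[l_1^{min}(t^*,t^*),\,L+\epsilon]$, justified by ``the left endpoint persists as a touching point for $t_2$.'' It is true that $l_1^{min}(t^*,t^*)$ remains a touching point of $H^{ir}(\cdot,t_2)$ with $H(\cdot,t_2)$ (your argument for this is correct). But that does \emph{not} place $l_1^{min}(t^*,t^*)$ on the affine piece through $F(t_2)$: since $l_1^{min}(t^*,t^*)\le F(t_2)$ is a touching point, we only get $l_1^{max}(t_2,t_2)\ge l_1^{min}(t^*,t^*)$, so $l_1^{min}(t^*,t^*)$ may sit on an earlier, lower-slope piece of $H^{ir}(\cdot,t_2)$. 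Convexity then gives the chord slope on $[l_1^{min}(t^*,t^*),L]$ is $\le m_2$, which is the wrong direction for your intended inequality $m_2\le m^*+(t_2-t^*)/(L-l_1^{min}(t^*,t^*))$.

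The fix is short and stays within your framework: work instead with $l_1^{max}(t_2,t_2)$, which \emph{is} on the affine piece. By Lemma~\ref{lemma1} one has $l_1^{max}(t_2,t_2)<F(t^*)$, hence $H^{ir}(l_1^{max}(t_2,t_2),t_2)=H(l_1^{max}(t_2,t_2),t_2)=H(l_1^{max}(t_2,t_2),t^*)\ge H^{ir}(l_1^{max}(t_2,t_2),t^*)$; and since $l_1^{max}(t_2,t_2)\ge l_1^{min}(t^*,t^*)$ (from your own touching-point observation), that last quantity equals $H^{ir}(L,t^*)-m^*(L-l_1^{max}(t_2,t_2))$. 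Together with $H^{ir}(L,t_2)\le H^{ir}(L,t^*)+(t_2-t^*)$ this yields
\[
m_2=\frac{H^{ir}(L,t_2)-H^{ir}(l_1^{max}(t_2,t_2),t_2)}{L-l_1^{max}(t_2,t_2)}\le m^*+\frac{t_2-t^*}{L-l_1^{max}(t_2,t_2)}<m^*+\frac{t_2-t^*}{L-F(t^*)},
\]
and the contradiction with your lower bound goes through exactly as you wrote once $t_2-t^*$ is small compared to $\Phi(\epsilon)$. After this correction, your argument and the paper's are genuinely different presentations of the same underlying convex-hull continuity: the paper encodes the ``strict departure past $L$'' via the auxiliary point $l_3$ and the quantity $W(\epsilon,t^*)$, whereas you encode it via $\Phi(\epsilon)$; your route is arguably more transparent, but it does rely on Lemma~\ref{lemma1} at the same crucial step.
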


Combined above lemmas,
$\exists \epsilon>0,~\forall t_2\in(t^*,t^*+\epsilon)$
we have $l^{min}_1(t_2,t_2)\in[l^{max}_1(t^*,t^*),l^{max}_1(t^*,t^*)+\epsilon]$ and
$l^{min}_2(t_2,t_2)\in[l^{max}_2(t^*,t^*),l^{max}_2(t^*,t^*)+\epsilon]$.
By the definition of $q^1$, $\hat{q}^1(t_2,t_2)$ can be arbitrarily close to $\check{q}^1(t^*,t^*)$.
Also, $\hat{q}^1(t,t)$ is increasing in $l_1$ and $l_2$, so $\hat{q}^1(t_2,t_2)\geq \check{q}^1(t^*,t^*)$.

When $t^*=0$, the formula of the allocation rule might be different.
But using the exact same method, we can prove $q^1(0,0)=\lim_{t^*\rightarrow 0}q^1(t^*,t^*)$.

The left direction is similar. It is also similar for $q^2$.
To sum up the value intervals seamlessly connect as $t^*$ increases.
\end{proof}

\subsection{Proof of Lemma~\ref{lemma1}}
\begin{proof}
\begin{figure}[h]
  \centering
  \includegraphics[width=6cm]{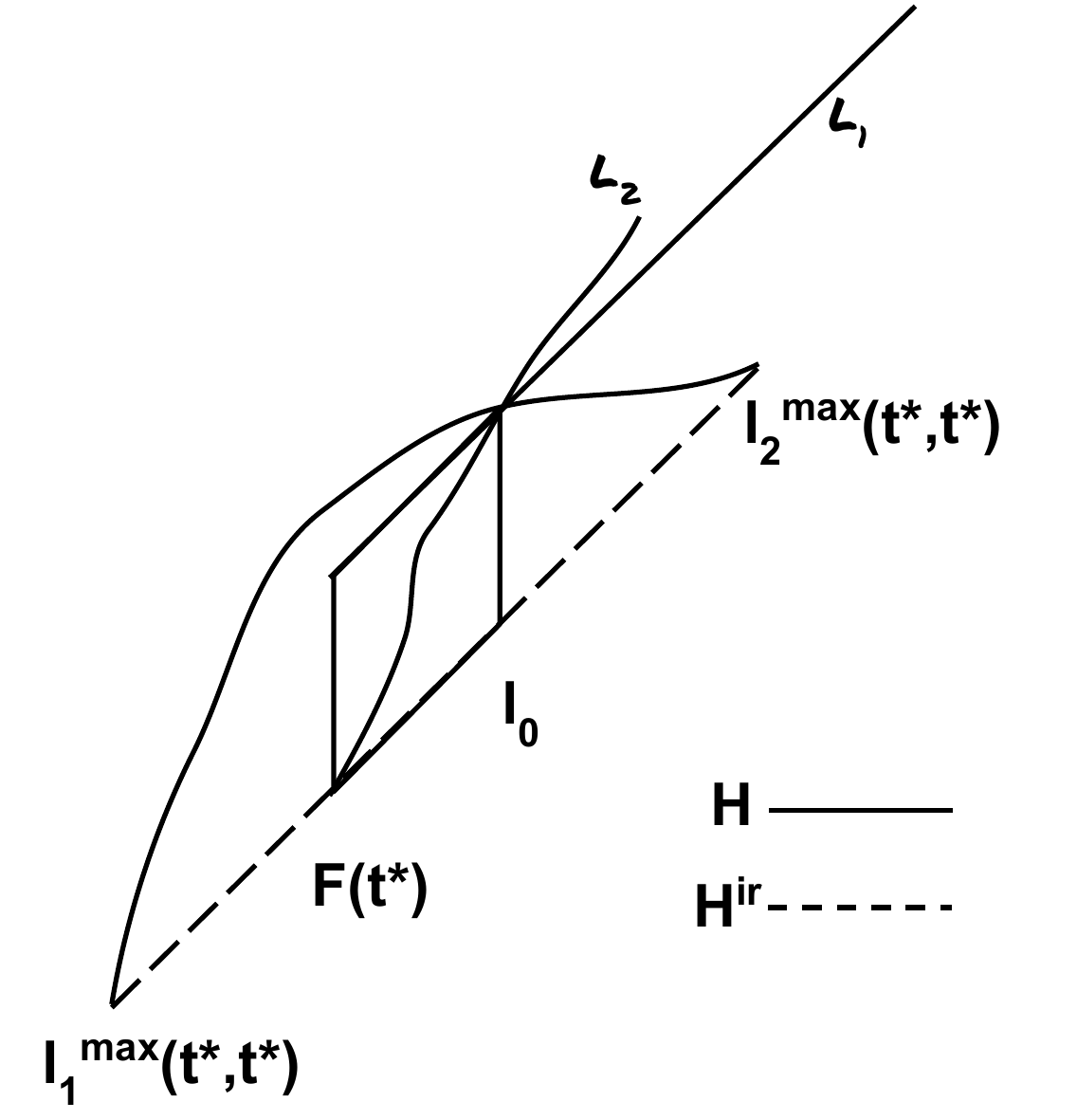}\\
  \caption{Path $\mathcal{L}_2$ and the parallelogram}\label{small}
\end{figure}

Look at the Fig \ref{small}, we put a parallelogram on point $(F(t^*), H^{ir}(F(t^*),t^*))$. One edge is vertical and one edge is on the boundary of
$H^{ir}(z,t^*)$. The opposite point moves along the path
$\mathcal{L}_2(z,F^{-1}(z)-t^*+H^{ir}(z,t^*)),~z>F(t)^*$. We consider the first parallelogram
touches the curve $H(z,t^*)$.
Then the axis of the opposite point is
$$l_0=\max_x\{  \forall t\in[t^*,F^{-1}(x)]~ H(F(t),t^*)\geq H^{ir}(F(t),t^*)+t-t^*\}$$
We claim when $t_2\in(t^*, F^{-1}(l_0)]$, $l_{1}^{max}(t_2,t_2)<F(t^*)$.

Otherwise, suppose $l_1^{max}(t_2,t_2)\geq F(t^*)$. Since $t_2>t^*$, we have
$$\frac{\partial H^{ir}}{\partial z}(F(t_2),t_2)\geq \frac{\partial H^{ir}}{\partial z}(F(t^*),t_2)>\frac{\partial H^{ir}}{\partial z}(F(t^*),t^*)$$

Point $A(l^{max}_1(t_2,t_2),H(l_1^{max}(t_2,t_2),t_2))$ and
$B(l_2^{max}(t^*,t^*),H(l_2^{max}(t^*,t^*),t_2))$.
Let $\mathcal{L}_1$ denote the straight line which contains one of the edge of the parallelogram.
Then point $A$ is above the line $\mathcal{L}_1$
\begin{eqnarray}
&&H(l_2^{max}(t^*,t^*),t_2))\nonumber\\
&=&H(l_2^{max}(t^*,t^*),t^*)+t_2-t^*\nonumber\\
&\leq&H(l_2^{max}(t^*,t^*),t^*)+F^{-1}(l_0)-t^*\nonumber
\end{eqnarray}
By definition of $l_0$, we have
$H(l_0,t^*)=H^{ir}(l_0,t^*)+F^{-1}(l_0)-t^*$. So
\begin{eqnarray}
&&H(l_2^{max}(t^*,t^*),t_2))\nonumber\\
&\leq&H(l_2^{max}(t^*,t^*),t^*)+H(l_0,t^*)-H^{ir}(l_0,t^*)\nonumber
\end{eqnarray}
This means point $B$ is on or below line $\mathcal{L}_1$.
For simplicity, we use $(AB)'$ and $(\mathcal{L}_1)'$ to denote the gradient of line $AB$ and line $\mathcal{L}_1$ respectively.
Then
\begin{eqnarray}
&&(AB)'\leq (\mathcal{L}_1)'\nonumber\\
&\Rightarrow& \frac{\partial H^{ir}}{\partial z}(F(t_2),t_2)\leq \frac{\partial H^{ir}}{\partial z}(F(t^*),t^*)\nonumber\\
&\Rightarrow& \frac{\partial H^{ir}}{\partial z}(F(t_2),t^*)\leq \frac{\partial H^{ir}}{\partial z}(F(t^*),t^*)\nonumber
\end{eqnarray}
Contradiction.
\end{proof}

\subsection{Proof of Lemma~\ref{lemma2}}
\begin{proof}
First prove a gadget,
\begin{eqnarray}
&&H(z,t_2)=H(z,t^*) \qquad z\in[0,F(t^*)]\nonumber\\
&&H(z,t_2)>H(z,t^*) \qquad z\in(F(t^*),1]\nonumber\\
&\Rightarrow&  \frac{\partial H^{ir}}{\partial z}(F(t^*),t_2)> \frac{\partial H^{ir}}{\partial z}(F(t^*),t^*)\label{eqz01}
\end{eqnarray}
\begin{eqnarray}
&& \frac{\partial_{+} H^{ir}}{\partial z}(l_1^{min}(t_2,t_2),t_2)\nonumber\\
&=&\frac{\partial H^{ir}}{\partial z}(F(t_2),t_2)\nonumber\\
&\geq&\frac{\partial H^{ir}}{\partial z}(F(t^*),t_2)\nonumber\\
&>&\frac{\partial H^{ir}}{\partial z}(F(t^*),t^*)\qquad{\textrm {by} (\ref{eqz01}})\nonumber\\
&=&\frac{\partial_{+} H^{ir}}{\partial z}(l^{max}_1(t^*,t^*),t^*)\nonumber\\
&\geq&\frac{\partial_{-} H^{ir}}{\partial z}(l^{max}_1(t^*,t^*),t^*)\nonumber\\
&=&\frac{\partial_{-} H^{ir}}{\partial z}(l^{max}_1(t^*,t^*),t_2)\quad (H^{ir}\textrm{~is the covex hull.})\nonumber
\end{eqnarray}
Thus $l^{min}_1(t_2,t_2)\geq l^{max}_1(t^*,t^*)$.
\end{proof}

\subsection{Proof of Lemma~\ref{lemma3}}
\begin{proof}
Consider the following three points:
\begin{eqnarray}
&&A(l^{max}_1(t^*,t^*), H(l^{max}_1(t^*,t^*),t^*)))\nonumber\\
&&B(l^{min}_1(t_2,t_2), H(l^{min}_1(t_2,t_2),t_2)))\nonumber\\
&&C(l^{max}_2(t^*,t^*), H(l^{max}_2(t^*,t^*),t_2)))\nonumber
\end{eqnarray}
Points $A$,$B$ and $C$ all lie on curve $(z,H^{ir}(z,t_2))$. By convexity, we know
$$(AB)'\leq (AC)'$$

Let $M(z,t^*),~z\in[0,F(t^*)]$ be the largest convex function under $H(z,t^*),~z\in[0,F(t^*)]$.
Note that any $t_2\in(t^*,F^{-1}(l^{min}_2(t^*,t^*)))$, $H(z,t_2)=H(z,t^*),~z\in[0,F(t^*)]$.
We must have $M(z,t^*)\geq H^{ir}(z,t_2),~z\in[0,F(t^*)]$.

By Lemma~\ref{lemma1}, we can pick $t_2$ which is small enough so that $l^{max}_1(t_2,t_2)<F(t^*)$. Then
$M(l^{min}_1(t_2,t_2),t^*)\geq H(l^{min}_1(t_2,t_2),t_2)$.
By definition of $M$,
\begin{eqnarray}
&&H(l^{min}_1(t_2,t_2),t^*)\nonumber\\
&\geq&M(l^{min}_1(t_2,t_2),t^*)\nonumber\\
&\geq&H(l^{min}_1(t_2,t_2),t_2)\nonumber\\
&=&H(l^{min}_1(t_2,t_2),t^*)\nonumber
\end{eqnarray}
Thus all the inequalities become equalities. Point $B$ must lie on curve $M(z,t^*)$.

Consider point $D(l^{max}_1(t^*,t^*)+\epsilon, M(l^{max}_1(t^*,t^*)+\epsilon,t^*))$, we must have
$$(AD)'\leq (AB)'$$, furthermore $(AD)'\leq (AC)'$.

$$\frac{H^{ir}(l^{max}_1(t^*,t^*)+\epsilon, t^*) - H(l^{max}_1(t^*,t^*), t^*)}{\epsilon}\leq
\frac{H(l^{max}_2(t^*,t^*), t_2) - H(l^{max}_1(t^*,t^*), t^*)}{l_2^{max}(t^*,t^*)-l^{max}_1(t^*,t^*)}$$
We transform the inequation and leave only $H(l^{max}_2(t^*,t^*), t_2)$ on the right hand side.
\begin{eqnarray}
W_1(\epsilon,t^*)&\leq& H(l^{max}_2(t^*,t^*), t_2)\nonumber\\
W_1(\epsilon,t^*)&\leq& H(l^{max}_2(t^*,t^*), t^*)+t_2-t^*\nonumber
\end{eqnarray}
We transform the inequation and leave only $t_2$ on the right hand side.
$$W_2(\epsilon,t^*)\leq t_2$$

Where $W_1$ and $W_2$ are functions of $\epsilon$ and $t^*$.
So as long as $t_2<W_2(\epsilon, t^*)$, we have
$l^{min}_1(t_2,t_2)\leq l^{max}_1(t^*,t^*)+\epsilon$

Consider point $E(l^{max}_2(t^*,t^*),H(l^{max}_2(t^*,t^*),t^*))$, we have $(AD)'>(AE)'$, i.e.
$$\frac{H^{ir}(l^{max}_1(t^*,t^*)+\epsilon, t^*) - H(l^{max}_1(t^*,t^*), t^*)}{\epsilon}>
\frac{H(l^{max}_2(t^*,t^*), t^*) - H(l^{max}_1(t^*,t^*), t^*)}{l_2^{max}(t^*,t^*)-l^{max}_1(t^*,t^*)}$$
Then we get $W_2(\epsilon, t^*)>t^*$. So as long as
$t_2\in (t^*, W_2(\epsilon, t^*))$, we have $l^{min}_1(t_2,t_2)\leq l^{max}_1(t^*,t^*)+\epsilon$.
\end{proof}

\subsection{Proof of Lemma~\ref{lemma4}}
\begin{proof}
By Lemma~\ref{lemma1}, we can pick $t_2>t^*$ which is small enough so that
\begin{eqnarray}
F(t_2)<l^{min}_2(t^*,t^*)\nonumber\\
l^{min}_1(t_2,t_2)<F(t^*)\nonumber
\end{eqnarray}
We claim that $l^{min}_2(t_2,t_2)\geq l^{max}_2(t^*,t^*)$.
Suppose not, consider the following six points in Fig~\ref{sixpoints}.
\begin{eqnarray}
A(l^{max}_1(t^*,t^*), H(l^{max}_1(t^*,t^*),t^*))\nonumber\\
B(l^{min}_1(t_2,t_2), H(l^{min}_1(t_2,t_2),t^*))\nonumber\\
C(l^{min}_2(t_2,t_2), H(l^{min}_2(t_2,t_2),t_2))\nonumber\\
D(l^{max}_2(t^*,t^*), H(l^{max}_2(t^*,t^*),t_2))\nonumber\\
E(l^{min}_2(t_2,t_2), H(l^{min}_2(t_2,t_2),t^*))\nonumber\\
F(l^{max}_2(t^*,t^*), H(l^{max}_2(t^*,t^*),t^*))\nonumber
\end{eqnarray}
\begin{figure}
  \centering
  \includegraphics[width=9cm]{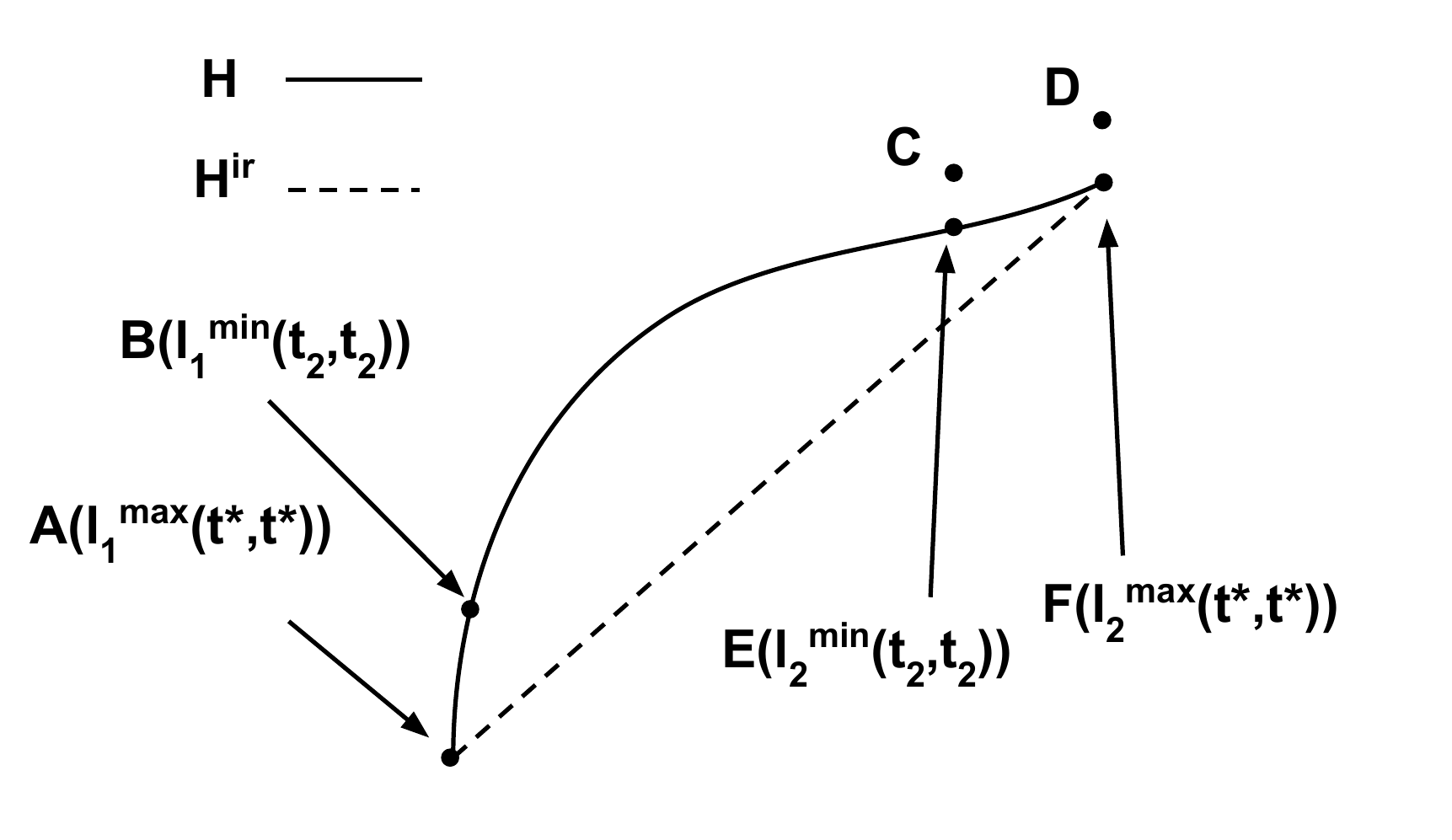}\\
  \caption{Six points}\label{sixpoints}
\end{figure}

Because $l^{min}_2(t_2,t_2)<l^{max}_2(t^*,t^*)$, point $C$ is on the left of point $D$.
Since $l^{min}_2(t_2,t_2)>F(t_2)$ and $l^{max}_2(t^*,t^*)>F(t_2)$, we have
$$H(z,t_2)-H(z,t^*)=F(t_2)-F(t^*),~z>F(t_2)$$
Thus $CDFE$ is parallelogram.
\begin{eqnarray}
&&\frac{\partial H^{ir}(F(t^*),t^*)}{\partial z}=(AF)'\nonumber\\
&\geq&(EF)'=(CD)'\nonumber\\
&\geq&(BC)'=\frac{\partial H^{ir}(F(t^*),t_2)}{\partial z}\nonumber\\
&>&\frac{\partial H^{ir}(F(t^*),t^*)}{\partial z}\nonumber
\end{eqnarray}
Contradiction.
So as long as $t_2$ is small enough such that
$F(t_2)<l^{min}_2(t^*,t^*)$ and $l^{min}_1(t_2,t_2)<F(t^*)$,
we have $l^{min}_2(t_2,t_2)\geq l^{max}_2(t^*,t^*)$.
\end{proof}
\subsection{Proof of Lemma~\ref{lemma5}}
\begin{proof}
For any $\epsilon>0$,
consider the boundary straight line on which point $(l_2^{max}(t^*,t^*)+\epsilon, H^{ir}(l^{max}_2(t^*,t^*)+\epsilon, t^*))$ lies.
This line might be a point or a line segment.
Let point $(l_3, H^{ir}(l_3,t^*))$ denote the left endpoint of this line. If this line is a point, then $l_3$ equals $l_2^{max}(t^*,t^*)+\epsilon$.
Formally,
$$l_3=\min_z\{\frac{\partial_+ H^{ir}}{\partial z}(z,t^*) =\frac{\partial_+ H^{ir}}{\partial z}(l^{max}_2(t^*,t^*)+\epsilon, t^*)\}$$
Then $l_3\geq l^{max}_2(t^*,t^*)$.
Choose $t_2\in (t^*, F^{-1}(l^{min}_2(t^*,t^*)))$, if $l^{min}_2(t_2,t_2)>l_3$, consider the following five points:
\begin{eqnarray}
&&A(l^{min}_1(t_2,t_2), H(l^{min}_1(t_2,t_2),t_2))\nonumber\\
&&B(l_3,H(l_3,t_2))\nonumber\\
&&C(l^{min}_2(t_2,t_2),H(l^{min}_2(t_2,t_2),t_2))\nonumber\\
&&D(l_3,H(l_3,t^*))\nonumber\\
&&E(l^{min}_2(t_2,t_2), H(l^{min}_2(t_2,t_2),t^*))\nonumber
\end{eqnarray}

Since $l^{min}_2(t_2,t_2)>F(t_2)$, and $l_3>F(t_2)$, we have
$$H(z,t_2)-H(z,t^*)=F(t_2)-F(t^*)$$
Then $BCED$ is a parallelogram.

By the definition of points $A$ and $C$, we know point $B$ is above the line $AC$. So
$(BA)'>(AC)'>(BC)'$.
$$(BC)'=(DE)'\geq \frac{\partial_+ H^{ir}}{\partial z}(l_3, t^*)=\frac{\partial_+ H^{ir}}{\partial z}(l_2^{max}(t^*,t^*)+\epsilon, t^*)$$

By Lemma~\ref{lemma1}, we can choose $t_2$ small enough such that
$l^{min}_1(t_2,t_2)<F(t^*)$, then
$$H(l^{min}_1(t_2,t_2),t_2)=H(l^{min}_1(t_2,t_2),t^*)$$

Thus, we have
\begin{eqnarray}
(BA)'&=&\frac{H(l_3,t_2)-H(l^{min}_1(t_2,t_2),t_2)}{l_3-l^{min}_1(t_2,t_2)}\nonumber\\
&=&\frac{F(t_2)-F(t^*)+H(l_3,t^*)-H(l^{min}_1(t_2,t_2),t^*)}{l_3-l^{min}_1(t_2,t_2)}\quad\textrm{(BCDE is parallelogram)}\nonumber
\end{eqnarray}

By $(BA)'>(BC)'$, we have
$$\frac{F(t_2)-F(t^*)+H(l_3,t^*)-H(l^{min}_1(t_2,t_2),t^*)}{l_3-l^{min}_1(t_2,t_2)}\geq \frac{\partial_+ H^{ir}}{\partial z}(l^{max}_2(t^*,t^*)+\epsilon, t^*)$$
\begin{eqnarray}
F(t_2)-F(t^*)&\geq& [l_3-l^{min}_1(t_2,t_2)]\cdot \frac{\partial_+ H^{ir}}{\partial z}(l^{max}_2(t^*,t^*)+\epsilon, t^*) \nonumber\\
&&+H(l^{min}_1(t_2,t_2),t^*)-H(l_3,t^*)\nonumber\\
&\geq&\min_{z\leq F(t^*)}\{(l_3-z)\cdot \frac{\partial_+ H^{ir}}{\partial z}(l^{max}_2(t^*,t^*)+\epsilon, t^*) \nonumber\\
&&+H(z,t^*)-H(l_3,t^*)\}\nonumber\\
&=&W(\epsilon,t^*)\nonumber
\end{eqnarray}
We let $W(\epsilon,t^*)$ denote the minimum part. Hence if $F(t_2)-F(t^*)<W(\epsilon,t^*)$, we have
$$l^{min}_2(t_2,t_2)\leq l_3\leq l^{max}_2(t_2,t_2)+\epsilon$$

Next we prove $W(\epsilon,t^*)>0$.
By definition of $l_3$, for any $z\in[0,l_3)$, point $(z,H(z,t^*))$ is above the straight line that goes through point $(l_3,H^{ir}(l_3,t^*))$ with gradient
$g=\frac{\partial_+ H^{ir}}{\partial z}(l^{max}_2(t^*,t^*)+\epsilon, t^*)$. So
$$H(z,t^*)>H(l_3,t^*)+(z-l_3)g$$
So $W(\epsilon,t^*)\geq 0$. Suppose $W(\epsilon,t^*)=0$. Since
$(l_3-z)g+H(z,t^*)-H(l_3,t^*)$ is a continuous function of $z$, then
there exists $z_0$ such that $(l_3-z_0)g+H(z_0,t^*)-H(l_3,t^*)=0$, which means $(z_0,H(z_0,t^*))$ lies on the straight line
that goes through point $(l_3,H^{ir}(l_3,t^*))$ with gradient $g$. Contradiction.
Hence $W(\epsilon,t^*)>0$. To sum up, as long as $t_2<F(t^*)+W(\epsilon,t^*)$, we have
$l_1^{min}(t_2,t_2)\leq l_2^{max}(t^*,t^*)+\epsilon$.
\end{proof}

\end{document}